\documentclass[journal]{IEEEtran}

\usepackage{amssymb}
\usepackage{amsmath}
\usepackage{cite}
\usepackage{url}
\usepackage{empheq}
\usepackage{xcolor}
\usepackage{graphicx}
\usepackage{subfigure}
\usepackage{enumitem}
\usepackage{fancyhdr}
\usepackage{mdwmath}	
\usepackage{mdwtab}
\usepackage{caption}
\usepackage{amsthm}
\usepackage{algorithm}
\usepackage{algorithmic}

\newtheorem{lemma}{Lemma}
\newtheorem{remark}{Remark}
\newtheorem{theorem}{Theorem}
\newtheorem{corollary}{Corollary}
\newtheorem{assumption}{Assumption}

\newtheorem{proposition}{Proposition}

\newcommand{\fref}[1]{Fig.~\ref{#1}}

\newcommand*{\QEDA}{\null\nobreak\hfill\ensuremath{\blacksquare}}

\begin{document}
\title{Pinching-Antenna Systems: From Antenna Placement to Antenna Roaming}
\author{Kaidi~Wang,~\IEEEmembership{Member,~IEEE,}
Daniel~K.~C.~So,~\IEEEmembership{Senior~Member,~IEEE,}
and~Zhiguo~Ding,~\IEEEmembership{Fellow,~IEEE}
\thanks{Kaidi~Wang and Daniel~K.~C.~So are with the Department of Electrical and Electronic Engineering, the University of Manchester, Manchester, M1 9BB, UK (email: kaidi.wang@ieee.org; d.so@manchester.ac.uk).}
\thanks{Zhiguo~Ding is with the School of Electrical and Electronic Engineering (EEE), Nanyang Technological University, Singapore 639798 (e-mail: zhiguo.ding@ntu.edu.sg).}}
\maketitle
%%%%%%%%%%%%%%%%%%%%%%%%%%%%%%%%%%%%%%%%%%%%%%%%%
%%%%%%%%%%%%%%%%%%%%%%%%%%%%%%%%%%%%%%%%%%%%%%%%%
\begin{abstract}
This paper investigates pinching-antenna systems with finite antenna movement speed, under which conventional antenna placement is subject to non-negligible repositioning delay, resulting in a fundamental tradeoff between channel quality and effective transmission time. In this context, antenna roaming is proposed as a novel operation mode, in which the antenna moves continuously along the waveguide while simultaneously serving users. By incorporating communication and antenna movement into the same transmission cycle, a unified cycle-duration framework is established to facilitate a fair comparison between antenna placement and antenna roaming. The sum-rate difference is then analytically derived, indicating that antenna roaming avoids dedicated positioning overhead with a loss in channel quality due to spatial averaging. The corresponding sum rate maximization problems are formulated for the two operation modes. The continuous optimization problems are transformed into finite-state sequential decision problems and solved via dynamic programming (DP) based algorithms. For antenna roaming, the optimal unconstrained service-interval partition is analytically characterized, with each antenna position assigned to the user achieving the highest instantaneous rate. Simulation results validate the theoretical analysis, demonstrate the performance advantage of antenna roaming over antenna placement, especially when positioning overhead is significant, and confirm the effectiveness of the DP based solutions in improving the achievable sum rate.
\end{abstract}
\begin{IEEEkeywords}
Pinching-antenna systems, finite antenna movement speed, antenna placement, antenna roaming, sum rate maximization, dynamic programming.
\end{IEEEkeywords}. 
%%%%%%%%%%%%%%%%%%%%%%%%%%%%%%%%%%%%%%%%%%%%%%%%% 
%%%%%%%%%%%%%%%%%%%%%%%%%%%%%%%%%%%%%%%%%%%%%%%%%
\section{Introduction}
The development of sixth-generation (6G) wireless networks calls for substantial advances in spectral efficiency, reliability, and coverage across diverse communication scenarios, prompting extensive research on flexible reconfigurable antenna technologies. To address these requirements, a variety of antenna architectures have been explored, among which reconfigurable intelligent surfaces (RISs), fluid antennas, and movable antennas have received particular attention \cite{wu2019irs, wong2020fluid, zhu2024movable}. While RISs reshape the wireless environment by controlling signal reflections through an array of passive elements \cite{pan2021ris}, the associated cascaded transmitter-RIS-receiver link incurs additional path loss, which may weaken the received signal and restrict the achievable performance. In contrast, fluid antennas and movable antennas enable antenna position reconfiguration through various mechanisms, including fluidic and electromechanical implementations \cite{wu2024fluid, ma2024movable}. However, their available movement regions are generally confined to several wavelengths, allowing them to exploit small-scale channel variations while providing limited access to large-scale spatial diversity. These shortcomings underscore the need for alternative antenna architectures that offer broader spatial reconfigurability and more effective adaptation to heterogeneous wireless environments.

Recently, pinching antennas have emerged as a promising flexible reconfigurable antenna technology. Specifically, in pinching-antenna systems, dielectric particles are attached to waveguides to create radiation points that couple guided signals into free space at desired positions \cite{suzuki2022pinching, ding2024pin}. By transmitting signals from the feed point to these radiation points through low-loss dielectric waveguides, pinching-antenna systems replace a substantial portion of free-space propagation with guided transmission, thereby shortening the wireless links and mitigating the overall propagation loss \cite{liu2025pinching, kaidi2025pin, xu2025pin}. Moreover, the radiation points can be flexibly adjusted along the waveguides to exploit large-scale spatial variations, facilitating the establishment of line-of-sight (LoS) links by circumventing physical blockages while suppressing interference by leveraging blockage effects on undesired links \cite{ding2025edma, kaidi2025pin4}. In addition to these communication benefits, pinching-antenna systems also provide practical advantages in implementation complexity and cost, owing to their simple dielectric waveguides and compact radiating elements \cite{yang2025pinching, lu2026pin}.
%%%%%%%%%%%%%%%%%%%%%%%%%%%%%%%%%%%%%%%%%%%%%%%%%
\subsection{Related Works}
Existing pinching-antenna systems can be broadly classified into continuous and discrete architectures. Specifically, continuous architectures allow radiation points to be continuously adjusted along the waveguide, providing high spatial flexibility and enabling antenna placement optimization. For a single-waveguide orthogonal multiple-access (OMA) system, \cite{chen2026pin} considered antenna placement under a practical propagation model that accounts for signal attenuation along the waveguide, revealing the tradeoff between in-waveguide power loss and free-space path loss. Within the same architecture, \cite{xu2025pin2} investigated antenna placement for a downlink non-orthogonal multiple access (NOMA) system, with antenna positions and power allocation jointly designed under heterogeneous quality-of-service (QoS) requirements. For multi-waveguide NOMA systems, the authors of \cite{xue2026pin} incorporated antenna placement, power allocation, and user scheduling into a unified resource-allocation framework to exploit the additional spatial degrees of freedom. For multicast transmission, \cite{chen2025pin, shan2026pin} focused on user fairness by formulating minimum rate maximization problems involving antenna placement and transmit beamforming under single- and multiple-waveguide architectures. Moreover, in \cite{zhou2026pin2}, antenna positions and transmit beamforming were optimized for weighted sum rate maximization, for which a gradient based meta-learning framework was developed to improve optimization efficiency.

In parallel, discrete pinching-antenna architectures enable spatial reconfiguration through antenna activation at discrete positions, which can be realized by either activating antennas pre-installed at fixed positions or repositioning antennas among predefined candidate positions. For NOMA assisted systems, \cite{kaidi2025pin2} investigated discrete antenna activation with waveguide assignment, successive interference cancellation (SIC) decoding order, and power allocation for sum rate maximization. Building on this framework, \cite{xu2026pass} considered transmit power minimization by optimizing antenna activation, transmit beamforming, and the number of activated antennas under an adjustable power radiation model. The framework was further extended to integrated sensing and communication (ISAC) in \cite{yu2026pin}, with discrete activation and beamforming optimized under communication constraints. Moreover, \cite{kaidi2025generalized} generalized the discrete pinching-antenna concept to leaky coaxial cables for low-frequency operation, jointly considering radiation-slot activation, user assignment, and power allocation.
%%%%%%%%%%%%%%%%%%%%%%%%%%%%%%%%%%%%%%%%%%%%%%%%%
\subsection{Motivation and Contributions}
Despite the practical advantages of discrete architectures, continuous pinching-antenna systems remain attractive due to the potential performance gains from higher spatial flexibility. However, existing works on continuous architectures typically assume instantaneous antenna placement and therefore neglect the repositioning delay caused by finite antenna movement speed \cite{chen2026pin, xu2025pin2, xue2026pin, chen2025pin, shan2026pin, zhou2026pin2}. In practice, antenna repositioning requires a non-negligible portion of the transmission cycle \cite{liu2026energy, xu2026pin2}, which introduces a fundamental tradeoff between channel quality improvement and effective transmission time. Moreover, since each antenna position determines the repositioning distance to subsequent service points, successive placement decisions become temporally coupled, substantially increasing optimization complexity. From an implementation perspective, antenna placement also imposes stringent hardware requirements, as the antenna must move rapidly and stop precisely at each optimized position \cite{gan2026pin2}. These limitations motivate a new operation mode, termed antenna roaming, in which the antenna moves continuously along the waveguide at a moderate speed while simultaneously providing communication service. To enable a fair comparison, antenna placement and antenna roaming are evaluated over a common transmission cycle that explicitly accounts for both antenna movement and communication time, providing a basis for subsequent performance analysis and optimization.

The main contributions can be summarized as follows:
\begin{itemize}[leftmargin=*]
\item By incorporating a finite antenna movement speed, a practical antenna placement model is established. The model explicitly characterizes the sequential repositioning of the pinching antenna between user-specific service points and includes the associated movement time within each scheduling cycle. The connection between the proposed model and conventional infinite-speed antenna placement is clarified, and the resulting changes in the placement strategy under finite-speed movement are identified.
\item A novel antenna roaming scheme is proposed, where the pinching antenna continuously moves along the waveguide while serving users over dedicated service intervals. A corresponding transmission model is developed based on service-interval partitioning, enabling uninterrupted data transmission throughout each scheduling cycle. The performance advantages and limitations of antenna roaming are clarified through comparison with antenna placement.
\item A unified analytical framework is developed to compare antenna placement and antenna roaming. The sum-rate difference is decomposed into the gain associated with effective transmission time and the loss associated with spatially averaged channel quality, thereby revealing the fundamental tradeoff between the two operation modes. Based on these analytical insights, corresponding QoS constrained sum rate maximization problems are formulated.
\item The structural properties of the formulated optimization problems are investigated. By exploiting the additive chain structure, the original problems are reformulated as finite-state sequential decision problems, for which dynamic programming (DP) based solutions are developed. For antenna roaming, the single-crossing property of the instantaneous rate functions is established, revealing the structure of the optimal unconstrained service-interval partition.
\end{itemize}
Simulation results validate the theoretical analysis and illustrate the distinct operating mechanisms of antenna placement and antenna roaming. The results further demonstrate the performance advantage of antenna roaming under practical finite-speed antenna movement, particularly when repositioning overhead is significant. The effectiveness of the proposed DP based algorithms is also confirmed through comparisons with benchmark spatial service designs.
%%%%%%%%%%%%%%%%%%%%%%%%%%%%%%%%%%%%%%%%%%%%%%%%%
\subsection{Organization}
The remainder of this paper is organized as follows. Section II develops the antenna placement and antenna roaming models, followed by a comparative performance analysis. Section III formulates the corresponding sum rate maximization problems. Sections IV and V develop DP based solutions for service-point deployment and service-interval partitioning, respectively. Section VI presents the simulation results, and Section VII concludes the paper.
%%%%%%%%%%%%%%%%%%%%%%%%%%%%%%%%%%%%%%%%%%%%%%%%%
%%%%%%%%%%%%%%%%%%%%%%%%%%%%%%%%%%%%%%%%%%%%%%%%%
\section{System Model}
Consider a downlink pinching-antenna system, where one base station (BS) serves $N$ single-antenna users via a pinching antenna deployed on a waveguide. The service region is a rectangular area of size $D_x \times D_y$, with the origin located at the center of the region. The waveguide is deployed at height $d$ along the $x$-axis at $y=0$, with the feed point located at the left end of the waveguide, i.e., $\boldsymbol{\psi}^\mathrm{feed}=(-D_x/2,0,d)$. The collection of all users is denoted by $\mathcal{N}=\{1,2,\dots,N\}$, and the position of user $n$ is given by $\boldsymbol{\psi}_n=(x_n,y_n,0)$. Without loss of generality, the users are indexed according to their $x$-coordinates\footnote{The user ordering is adopted to facilitate the description of the antenna movement and service procedure. For users without a predefined order, the service sequence can be optimized through user scheduling.}, i.e.,
\begin{equation}
-\frac{D_x}{2}\le x_1 \le x_2\le \cdots\le x_N\le \frac{D_x}{2}.
\end{equation}
This paper considers a representative scheduling cycle of duration $T$, during which the pinching antenna moves from the left end of the waveguide to the right end. The left-to-right movement is adopted for notational convenience, while the proposed formulation can be equally applied to the reverse direction or to a prescribed subregion of the waveguide.
%%%%%%%%%%%%%%%%%%%%%%%%%%%%%%%%%%%%%%%%%%%%%%%%%
\subsection{Channel and Signal Model}
In the considered pinching-antenna system, the antenna position is represented as a function of its horizontal coordinate $x$, i.e., $\boldsymbol{\psi}(x)=(x,0,d)$, where $x\in[-D_x/2,D_x/2]$. Accordingly, the channel between the pinching antenna and user $n$ is given by
\begin{equation}\label{channel}
h_n(x)=\frac{\eta e^{-j\left(\frac{2\pi}{\lambda}\left\|\boldsymbol{\psi}_n-\boldsymbol{\psi}(x)\right\|+\frac{2\pi}{\lambda_g}\left\|\boldsymbol{\psi}^\mathrm{feed}-\boldsymbol{\psi}(x)\right\|\right)}}{\left\|\boldsymbol{\psi}_n-\boldsymbol{\psi}(x)\right\|},
\end{equation}
where $\eta=\frac{c}{4\pi f_c}$ is the reference channel gain, $c$ is the speed of light, $f_c$ is the carrier frequency, $\lambda$ is the carrier wavelength, $\lambda_g=\lambda/n_\mathrm{eff}$ is the guided wavelength in the dielectric waveguide, $n_\mathrm{eff}$ is the effective refractive index, and $\|\cdot\|$ denotes the Euclidean norm.

For downlink transmission, time division multiple access (TDMA) is adopted, such that only one user is scheduled at any given time and no inter-user interference is present. When user $n$ is served by the pinching antenna at horizontal coordinate $x$, the received signal is given by
\begin{equation}
y_n(x)=h_n(x)\sqrt{P_t}s_n+z_n,
\end{equation}
where $P_t$ is the transmit power at the BS, $s_n$ is the transmitted signal for user $n$, and $z_n$ is the additive white Gaussian noise (AWGN). The corresponding instantaneous achievable rate of user $n$ can be expressed as follows:
\begin{equation}\label{rate}
r_n(x)=\log_2(1+\rho|h_n(x)|^2),
\end{equation}
where $\rho=P_t/\sigma^2$ is the transmit signal-to-noise ratio (SNR), and $\sigma^2$ is the noise power.

\begin{figure}[!t]
\centering\includegraphics[width=88mm]{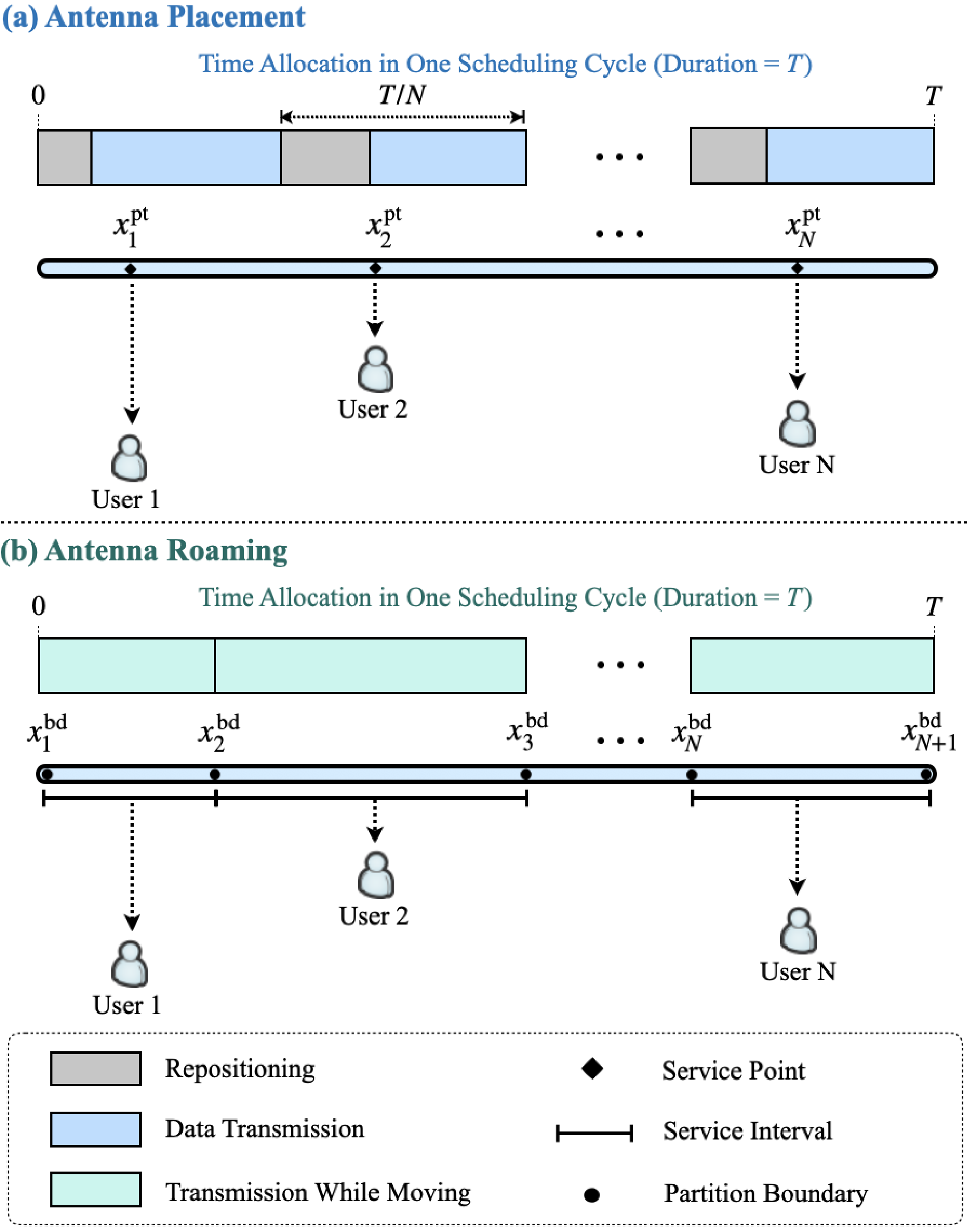}
\caption{System illustration of antenna placement and antenna roaming. In (a), the pinching antenna serves user $n$ at the service point $x_n^\mathrm{pt}$. In (b), the pinching antenna serves user $n$ over the service interval $[x_n^\mathrm{bd},x_{n+1}^\mathrm{bd}]$.}\vspace{-4mm}
\label{system}
\end{figure}
%%%%%%%%%%%%%%%%%%%%%%%%%%%%%%%%%%%%%%%%%%%%%%%%%
\subsection{Antenna Placement Model}
For continuously adjustable pinching antennas, a common operation mode is to reposition the antenna to different service points for serving multiple users \cite{xie2025pin, ding2025analytical}, with data transmission performed after each service point is reached, as shown in \fref{system}(a). In this subsection, the positioning time required to reach each service point is incorporated into the antenna placement model. Following the commonly adopted equal-resource allocation in OMA based systems \cite{ding2024pin}, the scheduling cycle of duration $T$ is equally divided among the $N$ users. Accordingly, each user is assigned a time slot of duration $T/N$ for both antenna positioning and data transmission.

For user $n$, the antenna position $\boldsymbol{\psi}(x_n^\mathrm{pt})$	 is uniquely determined by $x_n^\mathrm{pt}$, which is therefore referred to as the service point for notational simplicity. Under the considered left-to-right scheduling cycle, the service points satisfy
\begin{equation}
x_{n-1}^\mathrm{pt}\le x_n^\mathrm{pt}, \quad \forall n\in\mathcal{N},
\end{equation}
where $x_0^\mathrm{pt}=-D_x/2$ corresponds to the initial horizontal coordinate of the pinching antenna. To minimize the repositioning delay and thereby maximize the data transmission time within each time slot, the pinching antenna is assumed to move at its maximum speed\footnote{To facilitate a tractable characterization of the repositioning delay, the antenna repositioning process is modeled as constant-speed motion, with the speed variations during initial acceleration and final deceleration neglected.} $v_\mathrm{max}$. Accordingly, the time required to reposition the antenna from $x_{n-1}^\mathrm{pt}$ to $x_n^\mathrm{pt}$ is given by
\begin{equation}
\tau_n^\mathrm{po}=\frac{x_n^\mathrm{pt}-x_{n-1}^\mathrm{pt}}{v_\mathrm{max}}.
\end{equation}
Hence, the available transmission time for user $n$ is
\begin{equation}
\tau_n^\mathrm{tx}=\frac{T}{N}-\tau_n^\mathrm{po}.
\end{equation}
To ensure that the pinching antenna can reach each service point within the allocated time slot, the following feasibility condition must hold:
\begin{equation}
\frac{x_n^\mathrm{pt}-x_{n-1}^\mathrm{pt}}{v_\mathrm{max}}\le\frac{T}{N}, \quad \forall n\in\mathcal{N}.
\end{equation}
Based on the instantaneous achievable rate in \eqref{rate}, the average rate of user $n$ under antenna placement can be expressed as a function of two consecutive service points, as follows:
\begin{equation}
R_n^\mathrm{AP}(x_{n-1}^\mathrm{pt},x_n^\mathrm{pt})=\frac{\tau_n^\mathrm{tx}}{T}r_n(x_n^\mathrm{pt})=\left(\!\frac{1}{N}\!-\!\frac{x_n^\mathrm{pt}\!-\!x_{n-1}^\mathrm{pt}}{T v_\mathrm{max}}\!\right)\!r_n(x_n^\mathrm{pt}).
\end{equation}

The relationship between the considered antenna placement model with finite positioning time and conventional antenna placement is clarified in the following remarks.
\begin{remark}
The considered antenna placement model generalizes conventional antenna placement by incorporating the positioning time between consecutive service points. Specifically, conventional antenna placement can be regarded as a special case of the considered model when $v_\mathrm{max}\to\infty$, under which $\tau_n^\mathrm{po}\to 0$ and $R_n^\mathrm{AP}(x_{n-1}^\mathrm{pt},x_n^\mathrm{pt})\to r_n(x_n^\mathrm{pt})/N$. Therefore, this model captures the movement induced performance loss while remaining consistent with conventional antenna placement.
\end{remark}
\begin{remark}
In the antenna placement model, incorporating the positioning time introduces sequential coupling among the service-point decisions. With instantaneous repositioning, the average rate of each user depends solely on its own service point. In contrast, with finite positioning speed, it also depends on the preceding service point, and each selected service point influences the service of subsequent users.
\end{remark}
%%%%%%%%%%%%%%%%%%%%%%%%%%%%%%%%%%%%%%%%%%%%%%%%%
\subsection{Antenna Roaming Model}
Antenna placement requires the pinching antenna to be accurately positioned and stabilized at each optimized service point before data transmission, which may impose stringent mechanical control requirements. To alleviate this requirement, a novel operation mode, referred to as antenna roaming, is proposed, in which the pinching antenna moves continuously along the waveguide without stopping\footnote{Antenna roaming can be realized with a movable dielectric structure that slides along the waveguide while maintaining the physical contact or coupling condition, thereby preserving signal radiation throughout the movement.}. Unlike antenna placement, where each user is served at a specific service point, antenna roaming serves each user over a service interval along the antenna trajectory, as shown in \fref{system}(b).

Under antenna roaming, given the scheduling cycle duration $T$ and the waveguide length $D_x$, the pinching antenna moves at the following constant speed:
\begin{equation}\label{speed}
v_\mathrm{pin}=\frac{D_x}{T}.
\end{equation}
To ensure the feasibility of antenna roaming, the required movement speed must satisfy $v_\mathrm{pin}\le v_\mathrm{max}$. For the considered left-to-right movement, the horizontal coordinate of the pinching antenna at time $t\in[0,T]$ is given by
\begin{equation}\label{position}
x(t)=-\frac{D_x}{2}+v_\mathrm{pin}t.
\end{equation}
Since the antenna moves at a constant speed, its position is uniquely determined by time according to \eqref{position}. As the channel in \eqref{channel} depends on the antenna position, each user's channel becomes time-varying through the antenna trajectory $x(t)$. Under antenna roaming, the antenna trajectory is partitioned into $N$ consecutive service intervals\footnote{The consecutive service-interval structure is adopted for average rate maximization, such that the pinching antenna transmits continuously throughout the antenna trajectory. For other design objectives, such as power consumption minimization or energy efficiency maximization, a nonconsecutive service-interval partition may be considered, allowing the antenna to remain idle over certain portions of the trajectory.}, with the partition boundaries denoted by $\{x_n^\mathrm{bd}\}_{n=1}^{N+1}$ and satisfying
\begin{equation}
-\frac{D_x}{2}=x_1^\mathrm{bd}\le x_2^\mathrm{bd}\le\cdots\le x_{N+1}^\mathrm{bd}=\frac{D_x}{2}.
\end{equation}
Specifically, user $n$ is served over the interval $[x_n^\mathrm{bd},x_{n+1}^\mathrm{bd}]$, during which the pinching antenna moves from $x_n^\mathrm{bd}$ to $x_{n+1}^\mathrm{bd}$ over the corresponding time interval $[t_n^\mathrm{bd},t_{n+1}^\mathrm{bd}]$. The average rate of user $n$ under antenna roaming is given by
\begin{equation}
R_n^\mathrm{AR}=\frac{1}{T}\int_{t_n^\mathrm{bd}}^{t_{n+1}^\mathrm{bd}}\!\!r_n(x(t))\,dt.
\end{equation}
Unlike antenna placement, where users are assigned equal-duration time slots, antenna roaming also optimizes user service durations through the corresponding service-interval lengths. This additional degree of freedom enables more flexible resource allocation and may lead to improved system performance. Since $dx=v_\mathrm{pin}dt$, $x(t_n^\mathrm{bd})=x_n^\mathrm{bd}$, and $x(t_{n+1}^\mathrm{bd})=x_{n+1}^\mathrm{bd}$, the time domain expression can be transformed into the spatial domain as follows:
\begin{align}
R_n^\mathrm{AR}(x_n^\mathrm{bd},x_{n+1}^\mathrm{bd})&=\frac{1}{T v_\mathrm{pin}}\int_{x_n^\mathrm{bd}}^{x_{n+1}^\mathrm{bd}}\!\!r_n(x)\,dx\nonumber\\
&=\frac{1}{D_x}\int_{x_n^\mathrm{bd}}^{x_{n+1}^\mathrm{bd}}\!\log_2(1\!+\!\rho|h_n(x)|^2)\,dx.
\label{arrate}
\end{align}
The integral in \eqref{arrate} represents the spatial accumulation of the instantaneous achievable rate over the service interval\footnote{Alternatively, \eqref{arrate} can be interpreted as the limiting form of a discrete spatial approximation consistent with the discrete trajectory based communication model in \cite{wu2018uav, zeng2019uav}. Specifically, the antenna trajectory is partitioned into sufficiently small subintervals, and the instantaneous achievable rate is assumed constant within each subinterval. As the subinterval length approaches zero, the resulting summation converges to \eqref{arrate}.}. It can be observed that the average rate under antenna roaming does not explicitly depend on $T$ or $v_\mathrm{pin}$, as the temporal normalization factor $1/(T v_\mathrm{pin})$ reduces to the spatial normalization factor $1/D_x$ according to \eqref{speed}.

The differences between antenna placement and antenna roaming are summarized as follows.
\begin{remark}
Antenna placement and antenna roaming represent two distinct operation modes for movable pinching antennas. The former is point based, where the optimized service points determine the antenna positions for data transmission, whereas the latter is interval based, where the optimized partition boundaries divide the antenna trajectory into consecutive service intervals.
\end{remark}
\begin{remark}
In pinching-antenna systems with finite antenna speeds, spatial and temporal resource allocations are inherently coupled. Under antenna placement, the service points determine both the channel gains and the transmission times remaining after repositioning. Under antenna roaming, the partition boundaries determine the service intervals, thereby affecting both the channel variation over each interval and the service duration allocated to each user.
\end{remark}

Under antenna roaming, the movement of the pinching antenna may introduce a Doppler-like phase variation. However, the considered antenna movement speed is comparable to typical indoor pedestrian mobility adopted in standardized channel models \cite{3gpp38901}. At such a low movement speed, a $28$~GHz system with $v_\mathrm{pin}=1$~m/s experiences a maximum Doppler shift of approximately $93$~Hz. With a subcarrier spacing of $60$~kHz, this corresponds to a normalized Doppler shift of only $1.6\times10^{-3}$. Moreover, since the antenna trajectory is predetermined, the phase evolution can be tracked and compensated through synchronization and channel estimation. Therefore, the achievable rate is evaluated based on the distance dependent channel power gain $|h_n(x)|^2$, while residual phase-tracking errors are not explicitly modeled.
%%%%%%%%%%%%%%%%%%%%%%%%%%%%%%%%%%%%%%%%%%%%%%%%%
\subsection{Performance Analysis}
To characterize the relative advantages of antenna placement and antenna roaming, their sum average rates are compared in this subsection. For notational convenience, the spatial average of $r_n(x)$ over the service interval of user $n$ is defined as
\begin{equation}
\bar{r}_n^\mathrm{AR}=\frac{1}{x_{n+1}^\mathrm{bd}\!-\!x_n^\mathrm{bd}}\int_{x_n^\mathrm{bd}}^{x_{n+1}^\mathrm{bd}}\!r_n(x)dx,
\end{equation}
where $x_{n+1}^\mathrm{bd}-x_n^\mathrm{bd}>0$. Accordingly, the average rate of user $n$ can be rewritten as follows:
\begin{equation}\label{averate}
R_n^\mathrm{AR}(x_n^\mathrm{bd},x_{n+1}^\mathrm{bd})=\frac{x_{n+1}^\mathrm{bd}\!-\!x_n^\mathrm{bd}}{D_x}\bar{r}_n^\mathrm{AR}.
\end{equation}
The sum-rate gain of antenna roaming over antenna placement is characterized below.
\begin{proposition}\label{gain}
For any feasible antenna placement solution $\{x_n^\mathrm{pt}\}_{n=1}^{N}$ and antenna roaming solution $\{x_n^\mathrm{bd}\}_{n=1}^{N+1}$, the sum-rate gain of antenna roaming over antenna placement, denoted by $\Delta R_\mathrm{sum}$, can be expressed as follows:
\begin{align}
\Delta R_\mathrm{sum}&=\sum_{n=1}^{N}\!\left(\frac{x_{n+1}^\mathrm{bd}\!-\!x_n^\mathrm{bd}}{D_x}\!-\!\frac{1}{N}\!+\!\frac{x_n^\mathrm{pt}\!-\!x_{n-1}^\mathrm{pt}}{T v_\mathrm{max}}\right)\bar{r}_n^\mathrm{AR}\nonumber\\
&\quad-\!\sum_{n=1}^{N}\!\left(\frac{1}{N}\!-\!\frac{x_n^\mathrm{pt}\!-\!x_{n-1}^\mathrm{pt}}{T v_\mathrm{max}}\right)\!\left(r_n(x_n^\mathrm{pt})\!-\!\bar{r}_n^\mathrm{AR}\right).
\end{align}
\end{proposition}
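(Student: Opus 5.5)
This is a purely algebraic identity, so the plan is to write both sum rates explicitly and add and subtract a common term. By definition, $\Delta R_\mathrm{sum}=\sum_{n=1}^N R_n^\mathrm{AR}(x_n^\mathrm{bd},x_{n+1}^\mathrm{bd})-\sum_{n=1}^N R_n^\mathrm{AP}(x_{n-1}^\mathrm{pt},x_n^\mathrm{pt})$. For the roaming term we use \eqref{averate}, which writes $R_n^\mathrm{AR}=\frac{x_{n+1}^\mathrm{bd}-x_n^\mathrm{bd}}{D_x}\bar{r}_n^\mathrm{AR}$. For the placement term we use the expression $R_n^\mathrm{AP}=\alpha_n r_n(x_n^\mathrm{pt})$, where we abbreviate the effective time fraction as
\begin{equation*}
\alpha_n=\frac{1}{N}-\frac{x_n^\mathrm{pt}-x_{n-1}^\mathrm{pt}}{T v_\mathrm{max}}.
\end{equation*}
We need $x_{n+1}^\mathrm{bd}>x_n^\mathrm{bd}$ so that $\bar{r}_n^\mathrm{AR}$ is well defined. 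This holds for the roaming solutions considered here, since \eqref{averate} is stated under that assumption. If a degenerate interval must be allowed, we can give $\bar{r}_n^\mathrm{AR}$ any finite value, because its coefficient in the roaming rate is then zero.

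The key step is to add and subtract $\alpha_n\bar{r}_n^\mathrm{AR}$ inside each summand:
\begin{equation*}
R_n^\mathrm{AR}-R_n^\mathrm{AP}=\Big(\frac{x_{n+1}^\mathrm{bd}-x_n^\mathrm{bd}}{D_x}-\alpha_n\Big)\bar{r}_n^\mathrm{AR}-\alpha_n\big(r_n(x_n^\mathrm{pt})-\bar{r}_n^\mathrm{AR}\big).
\end{equation*}
Substituting $\alpha_n$ back, the first bracket becomes $\frac{x_{n+1}^\mathrm{bd}-x_n^\mathrm{bd}}{D_x}-\frac{1}{N}+\frac{x_n^\mathrm{pt}-x_{n-1}^\mathrm{pt}}{Tv_\mathrm{max}}$. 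Summing over $n\in\mathcal{N}$ then gives exactly the two sums in Proposition~\ref{gain}.

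No step is genuinely hard. The only care needed is in matching signs and in using the convention $x_0^\mathrm{pt}=-D_x/2$, so that the $n=1$ placement term has the correct repositioning time. It is also worth noting the interpretation. The first sum is the effective-transmission-time gain: time fractions are weighted by the roaming average rate, and since $\sum_n (x_{n+1}^\mathrm{bd}-x_n^\mathrm{bd})/D_x=1$, roaming uses the full cycle while placement loses the positioning overhead. The second sum is the channel-quality loss from spatial averaging relative to the point rate. Feasibility ensures $\alpha_n\ge 0$, which is only needed for this interpretation and not for the identity itself.
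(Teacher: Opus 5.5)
Your proof is correct and follows the paper's argument: write $R_\mathrm{sum}^\mathrm{AR}$ using \eqref{averate} and $R_\mathrm{sum}^\mathrm{AP}$ using the placement rate expression, then add and subtract $\sum_{n}\big(\frac{1}{N}-\frac{x_n^\mathrm{pt}-x_{n-1}^\mathrm{pt}}{Tv_\mathrm{max}}\big)\bar{r}_n^\mathrm{AR}$. Your side remark on degenerate intervals is a small addition. The paper avoids that case by assuming $x_{n+1}^\mathrm{bd}-x_n^\mathrm{bd}>0$. When that length is zero, the net coefficient of $\bar{r}_n^\mathrm{AR}$ in the decomposition is also zero, so any finite value works.
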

\begin{IEEEproof}
Based on the average rate expression in \eqref{averate}, the sum average rate under antenna roaming is given by
\begin{equation}
R_\mathrm{sum}^\mathrm{AR}=\sum_{n=1}^{N}\frac{x_{n+1}^\mathrm{bd}-x_n^\mathrm{bd}}{D_x}\bar{r}_n^\mathrm{AR}.
\end{equation}
The sum average rate under antenna placement is
\begin{equation}
R_\mathrm{sum}^\mathrm{AP}=\sum_{n=1}^{N}\left(\frac{1}{N}-\frac{x_n^\mathrm{pt}-x_{n-1}^\mathrm{pt}}{T v_\mathrm{max}}\right)r_n(x_n^\mathrm{pt}).
\end{equation}
Therefore, the sum-rate gain can be obtained as follows:
\begin{equation}
\Delta R_\mathrm{sum}\!=\!\sum_{n=1}^{N}\!\frac{x_{n+1}^\mathrm{bd}\!\!-\!x_n^\mathrm{bd}}{D_x}\bar{r}_n^\mathrm{AR}\!-\!\!\sum_{n=1}^{N}\!\!\left(\!\frac{1}{N}\!-\!\frac{x_n^\mathrm{pt}\!-\!x_{n-1}^\mathrm{pt}}{T v_\mathrm{max}}\!\right)\!r_n(x_n^\mathrm{pt}).
\end{equation}
By adding and subtracting $\sum_{n=1}^{N}\!\!\left(\frac{1}{N}\!-\!\frac{x_n^\mathrm{pt}-x_{n-1}^\mathrm{pt}}{T v_\mathrm{max}}\right)\bar{r}_n^\mathrm{AR}$, the desired expression is obtained, which completes the proof.
\end{IEEEproof}
Note that although $\Delta R_\mathrm{sum}$ is referred to as a sum-rate gain, it is not guaranteed to be positive for all feasible antenna placement and antenna roaming solutions. The decomposition in Proposition~\ref{gain} is interpreted in the following remark.
\begin{remark}
In Proposition~\ref{gain}, the term
\begin{equation}
\sum_{n=1}^{N}\Bigg[\underbrace{\frac{x_{n+1}^\mathrm{bd}\!-\!x_n^\mathrm{bd}}{D_x}}_{(a)}-\Bigg(\underbrace{\frac{1}{N}\!-\!\frac{x_n^\mathrm{pt}\!-\!x_{n-1}^\mathrm{pt}}{T v_\mathrm{max}}}_{(b)}\Bigg)\Bigg]\bar{r}_n^\mathrm{AR}
\end{equation}
characterizes the rate difference due to effective transmission time, determined by (a) the service-time fraction under antenna roaming and (b) the normalized effective transmission time under antenna placement. In contrast, the term
\begin{equation}
\sum_{n=1}^{N}\Bigg(\underbrace{\frac{1}{N}\!-\!\frac{x_n^\mathrm{pt}\!-\!x_{n-1}^\mathrm{pt}}{T v_\mathrm{max}}}_{(c)}\Bigg)\big(\underbrace{\vphantom{\frac{x_n^\mathrm{pt}\!-\!x_{n-1}^\mathrm{pt}}{T v_\mathrm{max}}}r_n(x_n^\mathrm{pt})\!-\!\bar{r}_n^\mathrm{AR}}_{(d)}\big)
\end{equation}
represents the channel induced rate difference between point based and interval based transmission, determined by (c) the normalized effective transmission time under antenna placement and (d) the corresponding achievable-rate difference.
\end{remark}

To further interpret the two terms in Proposition~\ref{gain}, a local approximation is derived for service points and service intervals sufficiently close to the corresponding user projections.
\begin{proposition}\label{approxgain}
By applying a second-order Taylor approximation of $r_n(x)$ around the user projection $x_n$, the sum-rate gain of antenna roaming over antenna placement is approximated as follows:
\begin{align}
\Delta R_\mathrm{sum}\!\approx &\sum_{n=1}^{N}\!\left(\frac{x_{n+1}^\mathrm{bd}\!-\!x_n^\mathrm{bd}}{D_x}\!-\!\frac{1}{N}\!+\!\frac{x_n^\mathrm{pt}\!-\!x_{n-1}^\mathrm{pt}}{Tv_\mathrm{max}}\right)r_n(x_n)\nonumber\\
&+\!\sum_{n=1}^{N}\!\left(\frac{1}{N}\!-\!\frac{x_n^\mathrm{pt}\!-\!x_{n-1}^\mathrm{pt}}{Tv_\mathrm{max}}\right)\kappa_n\!\left(x_n^\mathrm{pt}\!-\!x_n\right)^2\\
&-\!\sum_{n=1}^{N}\!\frac{x_{n+1}^\mathrm{bd}\!\!-\!x_n^\mathrm{bd}}{D_x}\kappa_n\!\!\left[\left(x_n^\mathrm{cen}\!-\!x_n\right)^2\!\!+\!\frac{\left(x_{n+1}^\mathrm{bd}\!\!-\!x_n^\mathrm{bd}\right)^2}{12}\right],\nonumber
\end{align}
where $\kappa_n=\rho\eta^2/[\ln 2(y_n^2+d^2)(y_n^2+d^2+\rho\eta^2)]$ and $x_n^\mathrm{cen}=(x_n^\mathrm{bd}+x_{n+1}^\mathrm{bd})/2$.
\end{proposition}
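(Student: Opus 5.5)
Start from the exact decomposition in Proposition~\ref{gain} and insert a local quadratic model of $r_n(x)$ around $x_n$. Note that $|h_n(x)|^2=\eta^2/[(x-x_n)^2+y_n^2+d^2]$, since the phase term has unit modulus. Hence $r_n(x)=\log_2\bigl(1+\rho\eta^2/(u+a_n)\bigr)$, with $u=(x-x_n)^2$ and $a_n=y_n^2+d^2$. This is a smooth function of $u$ and hence even in $x-x_n$, so the first-order term of the Taylor expansion in $x$ vanishes. Differentiating with respect to $u$ at $u=0$ gives
\begin{equation*}
\frac{\partial r_n}{\partial u}\Big|_{u=0}=-\frac{1}{\ln 2}\,\frac{\rho\eta^2}{a_n(a_n+\rho\eta^2)}=-\kappa_n .
\end{equation*}
The second-order approximation is therefore $r_n(x)\approx r_n(x_n)-\kappa_n(x-x_n)^2$, which matches the definition of $\kappa_n$ in the statement and makes it unnecessary to compute $r_n''$ separately.

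Next, I would substitute this model into the two pieces of Proposition~\ref{gain}. For the placement term, $r_n(x_n^\mathrm{pt})\approx r_n(x_n)-\kappa_n(x_n^\mathrm{pt}-x_n)^2$. For the roaming average, I would integrate the quadratic over $[x_n^\mathrm{bd},x_{n+1}^\mathrm{bd}]$ of length $L_n$ with midpoint $x_n^\mathrm{cen}$. Writing $x-x_n=(x-x_n^\mathrm{cen})+(x_n^\mathrm{cen}-x_n)$, the cross term integrates to zero and $\frac{1}{L_n}\int (x-x_n^\mathrm{cen})^2dx=L_n^2/12$. This gives
\begin{equation*}
\bar r_n^\mathrm{AR}\approx r_n(x_n)-\kappa_n\Bigl[(x_n^\mathrm{cen}-x_n)^2+\tfrac{L_n^2}{12}\Bigr].
\end{equation*}

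Rather than substituting into the split form of Proposition~\ref{gain}, where the two $\bar r_n^\mathrm{AR}$ pieces would have to be tracked, I would recombine it into $\Delta R_\mathrm{sum}=\sum_n \frac{L_n}{D_x}\bar r_n^\mathrm{AR}-\sum_n w_n r_n(x_n^\mathrm{pt})$, with $w_n=\frac1N-\frac{x_n^\mathrm{pt}-x_{n-1}^\mathrm{pt}}{Tv_\mathrm{max}}$. Inserting both approximations yields three groups of terms:
\begin{itemize}
\item the $r_n(x_n)$ terms give $\sum_n(\frac{L_n}{D_x}-w_n)r_n(x_n)$, the first line of the claim;
\item the placement curvature gives $+\sum_n w_n\kappa_n(x_n^\mathrm{pt}-x_n)^2$;
\item the roaming curvature gives $-\sum_n\frac{L_n}{D_x}\kappa_n[\cdots]$.
\end{itemize}
Together these form exactly the stated expression.

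The main care point is the Taylor step: confirming that the expansion is in $(x-x_n)^2$ so that odd terms vanish, and getting the constant $\kappa_n$ right, including the $\ln 2$ factor and the factor of 2 convention. The remaining steps are bookkeeping. I would also note that the approximation requires the service points and intervals to stay near $x_n$, where the quadratic remainder is negligible.
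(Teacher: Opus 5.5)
Your proposal is correct and follows the paper's proof. Both use the quadratic approximation $r_n(x)\approx r_n(x_n)-\kappa_n(x-x_n)^2$, average it over the service interval by shifting to the midpoint to get $(x_n^\mathrm{cen}-x_n)^2+(x_{n+1}^\mathrm{bd}-x_n^\mathrm{bd})^2/12$, and substitute into Proposition~\ref{gain}. The differences are cosmetic: you get $\kappa_n$ by differentiating in $u=(x-x_n)^2$ (equivalent, since $\tfrac{1}{2}r_n''(x_n)=\partial r_n/\partial u|_{u=0}$), and you substitute into the unsplit difference rather than the split form, which yields the same expression.
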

\begin{IEEEproof}
Refer to Appendix~A.
\end{IEEEproof}

Proposition~\ref{approxgain} provides a local geometric interpretation of the channel related terms in the sum-rate gain, which leads to the following remark.
\begin{remark}
Compared with Proposition~\ref{gain}, Proposition~\ref{approxgain} further characterizes the channel related terms through the squared position deviations from the user projections. The term
\begin{equation}
\sum_{n=1}^{N}\!\left(\frac{1}{N}\!-\!\frac{x_n^\mathrm{pt}\!-\!x_{n-1}^\mathrm{pt}}{Tv_\mathrm{max}}\right)\kappa_n\!\left(x_n^\mathrm{pt}\!-\!x_n\right)^2
\end{equation}
represents the rate loss of antenna placement caused by the deviation of the service point $x_n^\mathrm{pt}$ from the user projection $x_n$. In contrast, the term
\begin{equation}
\sum_{n=1}^{N}\!\frac{x_{n+1}^\mathrm{bd}\!-\!x_n^\mathrm{bd}}{D_x}\kappa_n\!\!\left[\left(x_n^\mathrm{cen}\!-\!x_n\right)^2\!\!+\!\frac{\left(x_{n+1}^\mathrm{bd}\!-\!x_n^\mathrm{bd}\right)^2}{12}\right]
\end{equation}
corresponds to the channel loss incurred by antenna roaming, which increases with the interval-center offset $|x_n^\mathrm{cen}-x_n|$ and the service-interval length $x_{n+1}^\mathrm{bd}-x_n^\mathrm{bd}$.
\end{remark}
%%%%%%%%%%%%%%%%%%%%%%%%%%%%%%%%%%%%%%%%%%%%%%%%%
%%%%%%%%%%%%%%%%%%%%%%%%%%%%%%%%%%%%%%%%%%%%%%%%%
\section{Problem Formulation}
In this section, the sum rate maximization problems for antenna placement and antenna roaming are formulated. The objective is to maximize the sum average rate of all users over one scheduling cycle while satisfying the individual minimum rate requirements. To this end, QoS constraints are imposed by specifying a common target rate $R_\mathrm{min}$ for all users.
%%%%%%%%%%%%%%%%%%%%%%%%%%%%%%%%%%%%%%%%%%%%%%%%%
\subsection{Service-Point Deployment for Antenna Placement}
For antenna placement, the service-point deployment problem optimizes the user-specific service points of the pinching antenna. Since the pinching antenna moves from left to right during the considered scheduling cycle, the service points must preserve the same ordering. Moreover, each service point must be reachable within the time slot allocated to the corresponding user. Accordingly, the sum rate maximization problem for antenna placement is formulated as follows:
\begin{subequations}
\begin{empheq}{align}
\max_{\{x_n^\mathrm{pt}\}_{n=1}^N}\quad & \sum_{n=1}^N R_n^\mathrm{AP}(x_{n-1}^\mathrm{pt},x_n^\mathrm{pt})\\
\textrm{s.t.}\quad 
& -\frac{D_x}{2}\leq x_n^\mathrm{pt}\leq \frac{D_x}{2},\,\forall n \in\mathcal{N},\\
& x_{n-1}^\mathrm{pt}\le x_n^\mathrm{pt},\,\forall n \in\mathcal{N},\\
& x_n^\mathrm{pt}-x_{n-1}^\mathrm{pt} \le v_\mathrm{max}\frac{T}{N},\,\forall n \in\mathcal{N},\\
& R_n^\mathrm{AP}(x_{n-1}^\mathrm{pt},x_n^\mathrm{pt}) \ge R_\mathrm{min},\,\forall n \in\mathcal{N}.
\end{empheq}
\label{approblem}
\end{subequations}\vspace{-2mm}\\
In problem \eqref{approblem}, constraint (\ref{approblem}b) confines each service point to the waveguide region. Constraints (\ref{approblem}c) and (\ref{approblem}d) jointly characterize left-to-right repositioning at a finite speed, ensuring that the pinching antenna can reach each service point within the corresponding time slot. Constraint (\ref{approblem}e) ensures that the average rate of each user meets the target rate.
%%%%%%%%%%%%%%%%%%%%%%%%%%%%%%%%%%%%%%%%%%%%%%%%%
\subsection{Service-Interval Partitioning for Antenna Roaming}
For antenna roaming, the service-interval partitioning problem optimizes the internal partition boundaries. These boundaries divide the antenna trajectory into user-specific service intervals, while the two endpoints are fixed at $x_1^\mathrm{bd}=-D_x/2$ and $x_{N+1}^\mathrm{bd}=D_x/2$. Accordingly, the sum rate maximization problem for antenna roaming is formulated as follows:
\begin{subequations}
\begin{empheq}{align}
\max_{\{x_n^\mathrm{bd}\}_{n=2}^N}\quad & \sum_{n=1}^N R_n^\mathrm{AR}(x_n^\mathrm{bd},x_{n+1}^\mathrm{bd})\\
\textrm{s.t.}\quad 
& x_n^\mathrm{bd}\le x_{n+1}^\mathrm{bd},\,\forall n \in\mathcal{N},\\
& R_n^\mathrm{AR}(x_n^\mathrm{bd},x_{n+1}^\mathrm{bd}) \ge R_\mathrm{min},\,\forall n \in\mathcal{N}.
\end{empheq}
\label{arproblem}
\end{subequations}\vspace{-2mm}\\
In problem \eqref{arproblem}, constraint (\ref{arproblem}b) preserves the left-to-right ordering of the partition boundaries while restricting them to the waveguide region. Constraint (\ref{arproblem}c) imposes the target-rate requirement on every user. In addition, the antenna speed should satisfy $v_\mathrm{pin}\le v_\mathrm{max}$. Since $v_\mathrm{pin}=D_x/T$ is fixed under the considered constant-speed antenna roaming model, this requirement is treated as a system-level feasibility condition and is therefore not explicitly included in problem \eqref{arproblem}.

Although both problems involve optimizing spatial points along the waveguide, these points have distinct physical interpretations. In problem \eqref{approblem}, the service points determine the tradeoff between repositioning delay and effective data transmission time under a finite antenna speed. By contrast, in problem \eqref{arproblem}, the partition boundaries define the user-specific service intervals, thereby affecting both the service duration and the channel variation experienced by each user.
%%%%%%%%%%%%%%%%%%%%%%%%%%%%%%%%%%%%%%%%%%%%%%%%%
%%%%%%%%%%%%%%%%%%%%%%%%%%%%%%%%%%%%%%%%%%%%%%%%%
\section{Solution for Service-Point Deployment}
Under antenna placement, the objective function of problem \eqref{approblem} shows that each user's contribution to the sum average rate depends only on two consecutive service points. Therefore, problem \eqref{approblem} exhibits an additive chain structure along the movement direction of the pinching antenna. By exploiting this structure, a DP based algorithm \cite{bellman1966dynamic, bertsekas2012dynamic} with discretized candidate service points is developed in this section.

To transform the original continuous-state service-point deployment problem into a finite-state sequential decision problem, the waveguide is uniformly discretized into $K$ candidate points \cite{chen2025pin, sun2026pin}, denoted by $\mathcal{Q}=\{q_1,q_2,\dots,q_K\}$. Specifically, the $k$-th candidate point is given by
\begin{equation}\label{candidateap}
q_k=-\frac{D_x}{2}+\frac{(k-1)D_x}{K-1}.
\end{equation}
In this case, the candidate service points are ordered from $-D_x/2$ to $D_x/2$, and each continuous service point $x_n^\mathrm{pt}$ is restricted to be selected from $\mathcal{Q}$, which automatically satisfies constraint (\ref{approblem}b). Consider a candidate state transition from $q_i$ to $q_j$, where $q_i$ and $q_j$ are selected as the service points of users $n-1$ and $n$, respectively. Based on constraints (\ref{approblem}c)-(\ref{approblem}e), the feasible transition set for user $n$ is defined as follows:
\begin{equation}
\mathcal{A}_n=\left\{\!(q_i, q_j)\!\in\!\mathcal{Q}\!\times\!\mathcal{Q}\,\Bigg|
\begin{array}{l}
\!0\le q_j\!-\!q_i\le v_\mathrm{max}\frac{T}{N},\\
\!R_n^\mathrm{AP}(q_i,q_j)\ge R_\mathrm{min}
\end{array}
\!\!\!\!\right\}.
\end{equation}
The first condition incorporates the left-to-right ordering constraint in (\ref{approblem}c) and the finite-speed constraint in (\ref{approblem}d), while the second imposes the QoS requirement in (\ref{approblem}e). Hence, a transition from $q_i$ to $q_j$ is feasible if and only if $(q_i,q_j)\in\mathcal{A}_n$. Based on this feasible transition set, the reward associated with the transition from $q_i$ to $q_j$ for user $n$ is defined as
\begin{equation}\label{ffunctionap}
F_n(q_i,q_j)\!=\!\begin{cases}\displaystyle
\!R_n^\mathrm{AP}(q_i,q_j),\!\!&\text{if } (q_i,q_j)\!\in\!\mathcal{A}_n,\\
-\infty, &\text{otherwise}.
\end{cases}
\end{equation}
For a feasible transition, $F_n(q_i,q_j)$ is the average rate achieved by user $n$, whereas the value $-\infty$ excludes infeasible transitions from the DP recursion.

Based on the sequential structure, an accumulated reward function, denoted by $V_n(q_j)$, is introduced to characterize the maximum accumulated sum of average rates for the first $n$ users, given that $q_j$ is selected as the service point of user $n$. Following the DP principle, for $n=2,\dots,N$, $V_n(q_j)$ is obtained by selecting the previous candidate service point that maximizes the accumulated reward, as follows:
\begin{equation}\label{accrewardap}
V_n(q_j)=\max_{q_i\in\mathcal{Q}}\left(V_{n-1}(q_i)+F_n(q_i,q_j)\right).
\end{equation}
By assigning $-\infty$ to infeasible transitions and states, infeasible paths are automatically excluded from the DP recursion. For each accumulated reward $V_n(q_j)$, the corresponding service-point path is recorded in $\mathcal{S}_n(q_j)$. Given a feasible current state $q_j$, the predecessor $q_i^\star$ is selected to identify the optimal predecessor path $\mathcal{S}_{n-1}(q_i^\star)$, where
\begin{equation}
q_i^\star\in\arg\max_{q_i\in\mathcal{Q}}\left(V_{n-1}(q_i)+F_n(q_i,q_j)\right).
\end{equation}
The path associated with $V_n(q_j)$ is then updated as
\begin{equation}
\mathcal{S}_n(q_j)=\left[\mathcal{S}_{n-1}(q_i^\star),q_j\right].
\end{equation}
For an infeasible state, $\mathcal{S}_n(q_j)$ is set to $\emptyset$.

For user $1$, with the initial antenna position fixed at $q_1=-D_x/2$, the accumulated reward for each candidate service point $q_j\in\mathcal{Q}$ is initialized as follows:
\begin{equation}
V_1(q_j)=F_1(q_1,q_j),
\end{equation}
where infeasible transitions are assigned $-\infty$ according to \eqref{ffunctionap}. After all $N$ users have been processed, the optimal terminal state is selected as
\begin{equation}
q_\mathrm{fin}^\star\in\arg\max_{q_j\in\mathcal{Q}}V_N(q_j),
\end{equation}
and the optimized service-point path is given by $\mathcal{S}_N(q_\mathrm{fin}^\star)$.

Based on the above formulation, a DP based service-point deployment algorithm is presented in Algorithm~\ref{apalg}. During initialization, the feasible service points for user $1$ are constructed from the initial pinching antenna position, providing the initial states for the DP recursion, as shown in lines 2–9. For each user $n\ge 2$ and each current state $q_j$, the accumulated reward and the corresponding service-point path are updated according to \eqref{accrewardap}. After all users have been processed, the optimal terminal state is selected, and the optimized discretized service-point path is obtained as $\mathcal{S}_N(q_\mathrm{fin}^\star)$, from which the optimized service points $\{x_n^{\mathrm{pt},\star}\}_{n=1}^{N}$ are determined.

\begin{algorithm}[t]
\caption{DP based Service-Point Deployment Algorithm}
\label{apalg}
\begin{algorithmic}[1]
\STATE Construct the candidate set $\mathcal{Q}$ according to \eqref{candidateap}.
\FOR{each $q_j\in\mathcal{Q}$}
\STATE Calculate $V_1(q_j)=F_1(q_1,q_j)$.
\IF{$V_1(q_j)=-\infty$}
\STATE Set $\mathcal{S}_1(q_j)=\emptyset$.
\ELSE
\STATE Set $\mathcal{S}_1(q_j)=[q_j]$.
\ENDIF
\ENDFOR
\FOR{$n=2$ to $N$}
\FOR{each $q_j\in\mathcal{Q}$}
\STATE Calculate $V_n(q_j)=\max_{q_i\in\mathcal{Q}}(V_{n-1}(q_i)+F_n(q_i,q_j))$.
\IF{$V_n(q_j)=-\infty$}
\STATE Set $\mathcal{S}_n(q_j)=\emptyset$.
\ELSE
\STATE Select $q_i^\star\in\arg\max_{q_i\in\mathcal{Q}}(V_{n-1}(q_i)+F_n(q_i,q_j))$.
\STATE Set $\mathcal{S}_n(q_j)=[\mathcal{S}_{n-1}(q_i^\star),q_j]$.
\ENDIF
\ENDFOR
\ENDFOR
\STATE Select $q_\mathrm{fin}^\star=\arg\max_{q_j\in\mathcal{Q}}V_N(q_j)$.
\STATE Output $\mathcal{S}_N(q_\mathrm{fin}^\star)$.
\end{algorithmic}
\end{algorithm}

For each user $n$ and current state $q_j$, the DP recursion searches over the feasible predecessor set $\mathcal{P}_n(q_j)$. Since each state has at most $K$ predecessors, the worst-case computational complexity is $\mathcal{O}(NK^2)$. In practice, the movement and QoS constraints reduce the number of feasible predecessors, thereby lowering the computational complexity. Due to the additive chain structure, the DP based algorithm is globally optimal for the discretized version of problem \eqref{approblem}. For the original continuous problem, the resulting solution is generally suboptimal, while increasing the number of candidate points improves the discretization accuracy at the cost of higher complexity.
%%%%%%%%%%%%%%%%%%%%%%%%%%%%%%%%%%%%%%%%%%%%%%%%%
%%%%%%%%%%%%%%%%%%%%%%%%%%%%%%%%%%%%%%%%%%%%%%%%%
\section{Solution for Service-Interval Partitioning}
This section develops a solution to problem \eqref{arproblem} by exploiting the properties of the instantaneous rate functions. The service-interval structure of antenna roaming is characterized, with the intersections of the instantaneous rate functions determining the optimal partition boundaries for the unconstrained problem. Based on this structural result, a DP based algorithm is developed to solve the service-interval partitioning problem.
%%%%%%%%%%%%%%%%%%%%%%%%%%%%%%%%%%%%%%%%%%%%%%%%%
\subsection{Unconstrained Service-Interval Structure}
As shown in \eqref{arrate}, the sum rate is determined by the partition of the antenna trajectory among the users. Without QoS constraints, the optimal partition can be obtained by assigning each antenna position to the user achieving the highest instantaneous rate. To characterize the resulting partition-boundary structure, the following lemma establishes a single-crossing property of the instantaneous rate functions.
\begin{lemma}\label{crossing}
In the considered pinching-antenna system, for any two users $m$ and $n$ with either $x_m\neq x_n$ or $y_m^2\neq y_n^2$, the instantaneous rate functions $r_m(x)$ and $r_n(x)$ intersect at most once along the waveguide under fixed transmit power.
\end{lemma}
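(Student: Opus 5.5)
Since $r_n(x)=\log_2(1+\rho|h_n(x)|^2)$ is strictly increasing in the channel power gain, and the phase terms in \eqref{channel} have unit modulus, we have $|h_n(x)|^2=\eta^2/\|\boldsymbol{\psi}_n-\boldsymbol{\psi}(x)\|^2$ with
\begin{equation*}
\|\boldsymbol{\psi}_n-\boldsymbol{\psi}(x)\|^2=(x-x_n)^2+y_n^2+d^2 .
\end{equation*}
Hence $r_m(x)=r_n(x)$ holds exactly when the squared distances from the antenna to the two users coincide. The plan is therefore to reduce the intersection question for the rate functions to a question about the distance functions. This reduction needs only fixed transmit power, so that the same $\rho$ and the same monotone map apply to both users.

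Writing $g_n(x)=(x-x_n)^2+y_n^2+d^2$, the condition $g_m(x)=g_n(x)$ becomes, after the $x^2$ terms cancel,
\begin{equation*}
2(x_n-x_m)\,x = x_n^2-x_m^2+y_n^2-y_m^2 .
\end{equation*}
This is an affine equation in $x$. I would then split into two cases according to the hypothesis.

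\emph{Case $x_m\neq x_n$.} The coefficient of $x$ is nonzero, so there is exactly one real solution,
\begin{equation*}
x^\star=\frac{x_m+x_n}{2}+\frac{y_n^2-y_m^2}{2(x_n-x_m)} .
\end{equation*}
The curves therefore meet at most once on the waveguide $[-D_x/2,D_x/2]$: once if $x^\star$ lies in this interval, and not at all otherwise.

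\emph{Case $x_m=x_n$ and $y_m^2\neq y_n^2$.} The left-hand side vanishes while the right-hand side equals $y_n^2-y_m^2\neq 0$, so there is no solution and the curves never intersect.

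In both cases the rate functions intersect at most once, which proves the lemma.

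The only delicate point is the excluded case $x_m=x_n$ with $y_m^2=y_n^2$. There the two rate functions coincide identically, and this is precisely what the hypothesis rules out. I would also remark that $g_m-g_n$ is affine, so its sign changes at $x^\star$. This gives a genuine crossing, not a tangency, which is the single-crossing property used later for the partition structure.
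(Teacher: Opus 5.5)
Your proof is correct and follows the paper's argument essentially step for step. Both remove the phase, use strict monotonicity of $\log_2(1+\cdot)$ to reduce to comparing squared distances, let the $x^2$ terms cancel to leave an affine function of $x$, and split into the cases $x_m\neq x_n$ and $x_m=x_n$, $y_m^2\neq y_n^2$; the only cosmetic difference is that the paper works with the inequality $r_m(x)\ge r_n(x)$ rather than the equation, which your closing sign-change remark recovers.
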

\begin{IEEEproof}
By taking the squared magnitude of the channel coefficient, the phase term is eliminated, giving the following channel power gain for user $n$:
\begin{equation}
|h_n(x)|^2=\frac{\eta^2}{(x-x_n)^2+y_n^2+d^2}.
\end{equation}
Under fixed transmit power, the instantaneous rate depends on the antenna position only through the corresponding antenna-user distance. Since $\log_2(1+z)$ is strictly increasing in $z$, $r_m(x)\ge r_n(x)$ is equivalent to
\begin{equation}
(x-x_m)^2+y_m^2+d^2\le (x-x_n)^2+y_n^2+d^2.
\end{equation}
After simplification, this condition becomes
\begin{equation}
2(x_n-x_m)x+x_m^2-x_n^2+y_m^2-y_n^2\le 0.
\end{equation}
If $x_m\neq x_n$, the left-hand side is a non-constant affine function of $x$ and therefore has at most one root. If $x_m=x_n$ and $y_m^2\neq y_n^2$, the left-hand side reduces to a nonzero constant and therefore has no root. Hence, $r_m(x)=r_n(x)$ holds at most once along the waveguide, which completes the proof.
\end{IEEEproof}
Lemma~\ref{crossing} indicates that the rate ordering between any two users changes at most once along the waveguide. With the users indexed by their $x$-coordinates, this single-crossing property provides the basis for characterizing the boundary structure of the unconstrained service-interval partitioning problem. Specifically, each partition boundary between two nonempty adjacent service intervals is located at an intersection of the corresponding instantaneous rate functions, as stated in the following proposition.
\begin{proposition}\label{uncons}
For antenna roaming without QoS constraints, an optimal service partition assigns each antenna position $x$ to a user satisfying
\begin{equation}
n^\star(x)\in\arg\max_{i\in\mathcal{N}}r_i(x).
\end{equation}
Moreover, if a partition boundary $x^{\mathrm{bd},\star}$ separates two nonempty adjacent service intervals assigned to users $m$ and $n$, then
\begin{equation}
r_m(x^{\mathrm{bd},\star})=r_n(x^{\mathrm{bd},\star})=\max_{i\in\mathcal{N}}r_i(x^{\mathrm{bd},\star}).
\end{equation}
\end{proposition}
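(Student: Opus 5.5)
The plan is a pointwise upper bound argument for the first claim, followed by a continuity and exchange argument for the boundary condition.

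\textbf{Step 1: pointwise upper bound.} Without constraint (\ref{arproblem}c), the objective of problem \eqref{arproblem} is
\begin{equation*}
\frac{1}{D_x}\sum_{n=1}^N\int_{x_n^\mathrm{bd}}^{x_{n+1}^\mathrm{bd}} r_n(x)\,dx=\frac{1}{D_x}\int_{-D_x/2}^{D_x/2} r_{a(x)}(x)\,dx,
\end{equation*}
where $a(x)$ is the user assigned to position $x$. Since $r_{a(x)}(x)\le \max_{i\in\mathcal{N}} r_i(x)$ for every $x$, every feasible partition satisfies
\begin{equation*}
\sum_{n=1}^N R_n^\mathrm{AR}\le \frac{1}{D_x}\int_{-D_x/2}^{D_x/2}\max_{i\in\mathcal{N}} r_i(x)\,dx.
\end{equation*}

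\textbf{Step 2: attainability by a consecutive partition.} It remains to show that this bound is achieved by an ordered partition $x_1^\mathrm{bd}\le\cdots\le x_{N+1}^\mathrm{bd}$ with user $n$ on the $n$-th interval. I will use the explicit form $|h_n(x)|^2=\eta^2/((x-x_n)^2+y_n^2+d^2)$. Maximizing $r_i(x)$ is therefore equivalent to minimizing
\begin{equation*}
(x-x_i)^2+y_i^2=x^2-2x_ix+x_i^2+y_i^2 .
\end{equation*}
This amounts to maximizing the affine function $2x_ix-(x_i^2+y_i^2)$, so the upper envelope is that of a family of lines with slopes $2x_i$. By standard lower/upper-envelope reasoning, which Lemma~\ref{crossing} also supports, a line with larger slope dominates on a region to the right of one with smaller slope. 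Hence the set where a given user attains the maximum is an interval, possibly empty, and these intervals appear in nondecreasing order of $x_i$. This matches the indexing of the users.

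\textbf{Step 3: main obstacle.} The hardest part is handling ties and degeneracies. These include users with equal $x_i$, where the line with larger $y_i^2$ is never strictly optimal, and users whose interval is empty. I will set $x_n^\mathrm{bd}=x_{n+1}^\mathrm{bd}$ for users that never strictly win. Ties occur only at finitely many points (or equal-line cases), so they form a measure-zero set and do not affect the integral. A selection $n^\star(x)$ that is monotone in $x$ and breaks ties toward the smaller index then yields a valid consecutive partition that attains the bound. This proves the first claim.

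\textbf{Step 4: boundary condition.} Let a boundary $x^{\mathrm{bd},\star}$ separate nonempty intervals of users $m$ and $n$. Since the optimal partition attains the pointwise maximum almost everywhere, we have $r_m=\max_i r_i$ almost everywhere on a left neighborhood of $x^{\mathrm{bd},\star}$, and $r_n=\max_i r_i$ almost everywhere on a right neighborhood. All functions involved, including the maximum, are continuous, so taking one-sided limits gives
\begin{equation*}
r_m(x^{\mathrm{bd},\star})=\max_i r_i(x^{\mathrm{bd},\star})=r_n(x^{\mathrm{bd},\star}).
\end{equation*}

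For a general optimal partition that is not known in advance to follow the argmax rule, I will use an exchange argument. Suppose, say, $r_m(x^{\mathrm{bd},\star})<r_n(x^{\mathrm{bd},\star})$. Moving the boundary slightly left strictly increases the objective by continuity, which is a contradiction. The same argument shows that $\max_i r_i$ is attained at the boundary: otherwise, some nearby small set would not achieve the pointwise maximum, contradicting the attained bound from Step 1.
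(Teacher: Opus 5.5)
Your proof is correct and uses the same two ideas as the paper's proof. The unconstrained objective $\frac{1}{D_x}\int r_{a(x)}(x)\,dx$ is maximized pointwise by an $\arg\max$ assignment, and continuity of the $r_i$ forces the rates of the two neighbouring users to coincide with the maximum at a boundary between nonempty intervals.

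You go beyond the paper in two places. In Steps~2--3, you use the upper envelope of the lines $2x_ix-(x_i^2+y_i^2)$ to show that the $\arg\max$ regions are intervals ordered by $x_i$. This establishes that the pointwise-optimal assignment is actually a feasible ordered partition of problem~\eqref{arproblem}; the paper's proof takes this for granted and appeals to Lemma~\ref{crossing} only informally afterwards. In Step~4, your almost-everywhere argument gives the boundary identity for every optimal partition, not just for the $\arg\max$ one.
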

\begin{IEEEproof}
Without QoS constraints, the objective in \eqref{arproblem} is the integral of the instantaneous rate selected at each antenna position. Since the assignment at one position does not affect that at another, the objective is maximized by assigning each position to a user achieving the highest instantaneous rate, i.e.,
\begin{equation}
n^\star(x)\in\arg\max_{i\in\mathcal{N}}r_i(x).
\end{equation}
As a result, any user that does not achieve the maximum instantaneous rate at any antenna position is assigned an empty service interval. Moreover, consider an interior partition boundary $x^{\mathrm{bd},\star}$ separating two nonempty adjacent service intervals assigned to users $m$ and $n$. Immediately to the left and right of the partition boundary, $m$ and $n$ achieve the maximum instantaneous rate, respectively. By continuity of the instantaneous rate functions, the following equality holds:
\begin{equation}
r_m(x^{\mathrm{bd},\star})=r_n(x^{\mathrm{bd},\star})=\max_{i\in\mathcal{N}} r_i(x^{\mathrm{bd},\star}),
\end{equation}
which completes the proof.
\end{IEEEproof}

Proposition~\ref{uncons} characterizes the optimal service partition for antenna roaming without QoS constraints. It shows that each antenna position is optimally assigned to the user with the highest instantaneous rate, and that the boundary between two nonempty adjacent service intervals is determined by the equality of their instantaneous rates. Together with Lemma~\ref{crossing}, this proposition reveals the spatial service structure when users are indexed by their $x$-coordinates, which is reflected by the spatial ordering of the service points in antenna placement and the service intervals in antenna roaming. Moreover, the optimal service-interval partition is completely determined by the geometry of the instantaneous rate functions.

However, the unconstrained structure does not incorporate the target-rate requirements in problem \eqref{arproblem}. Under these constraints, each user must be allocated a service interval that achieves the required average rate, and the partition boundaries may therefore deviate from the intersections of the instantaneous rate functions. Accordingly, the constrained service-interval partitioning problem is addressed by the DP based method developed in the next subsection.
%%%%%%%%%%%%%%%%%%%%%%%%%%%%%%%%%%%%%%%%%%%%%%%%%
\subsection{DP based Constrained Service-Interval Partitioning}
Under the QoS constraint~(\ref{arproblem}c), each user must be allocated a service interval that provides at least the minimum average rate. Therefore, the partition boundaries must be optimized by jointly considering the sum rate objective and the QoS constraints. In this subsection, a DP based algorithm is developed to solve the constrained service-interval partitioning problem over discretized partition boundaries.

To enable a finite-state DP search for antenna roaming, the partition boundaries are selected from a uniformly sampled candidate set $\mathcal{P}=\{p_1,p_2,\dots,p_K\}$. Specifically, for $k=1,2,\dots,K$, the candidate boundary $p_k$ is given by
\begin{equation}\label{candidatear}
p_k=-\frac{D_x}{2}+\frac{(k-1)D_x}{K-1}.
\end{equation}
The endpoint boundaries are fixed at $x_1^\mathrm{bd}=-D_x/2$ and $x_{N+1}^\mathrm{bd}=D_x/2$, while the internal partition boundaries $x_n^\mathrm{bd}$, $n=2,\dots,N$, are selected from $\mathcal{P}$. Under antenna roaming, each DP transition determines the service interval of one user. Given two candidate boundaries $p_i,p_j\in\mathcal{P}$, the transition from $p_i$ to $p_j$ for user $n$ corresponds to assigning the service interval $[p_i,p_j]$ to that user. Based on the ordered partition constraint in (\ref{arproblem}b) and the QoS constraint in (\ref{arproblem}c), the feasible transition set for user $n$ is defined as
\begin{equation}
\mathcal{B}_n=\left\{\!(p_i, p_j)\!\in\!\mathcal{P}\!\times\!\mathcal{P}\,\Bigg|
\begin{array}{l}
\!p_i\le p_j,\\
\!R_n^\mathrm{AR}(p_i,p_j)\ge R_\mathrm{min}
\end{array}
\!\!\right\}.
\end{equation}
The first condition ensures that the antenna trajectory is partitioned in the prescribed order, while the second guarantees that user $n$ can achieve the target rate over the service interval $[p_i,p_j]$. Accordingly, the stage reward is defined as
\begin{equation}
F_n(p_i,p_j)=\begin{cases}
R_n^\mathrm{AR}(p_i,p_j), & \text{if } (p_i,p_j)\in\mathcal{B}_n,\\
-\infty, & \text{otherwise}.
\end{cases}
\end{equation}
Thus, a feasible transition contributes the average rate of user $n$ over $[p_i,p_j]$ to the accumulated reward, whereas an infeasible transition is assigned $-\infty$. Since the starting boundary $p_i$ of user $n$ is also the ending boundary of user $n-1$, its effect on the preceding service intervals is captured by the accumulated reward associated with state $p_i$ in the previous DP stage rather than by the current stage reward $F_n(p_i,p_j)$.

Let $V_n(p_j)$ denote the maximum accumulated reward for users $1,\dots,n$, given that $p_j$ is the ending boundary of user $n$. For each candidate ending boundary $p_j$, the DP recursion searches over all candidate starting boundaries $p_i$ of user $n$, where $p_i$ also serves as the ending boundary of user $n-1$. For $n=2,\dots,N$, the recursion is given by
\begin{equation}\label{accrewardcs}
V_n(p_j)=\max_{p_i\in\mathcal{P}}\left(V_{n-1}(p_i)+F_n(p_i,p_j)\right).
\end{equation}
For a feasible state $p_j$, the optimal predecessor boundary $p_i^\star$ is selected as follows:
\begin{equation}\label{statecs}
p_i^\star\in\arg\max_{p_i\in\mathcal{P}}\left(V_{n-1}(p_i)+F_n(p_i,p_j)\right).
\end{equation}
Let $\mathcal{S}_n(p_j)$ denote the partition boundary path associated with $V_n(p_j)$. It is updated recursively as
\begin{equation}
\mathcal{S}_n(p_j)=\left[\mathcal{S}_{n-1}(p_i^\star),p_j\right].
\end{equation}
For an infeasible state, $\mathcal{S}_n(p_j)=\emptyset$.

For user $1$, the starting boundary is fixed to the left endpoint of the waveguide, i.e., $x_1^\mathrm{bd}=-D_x/2=p_1$. Therefore, for each candidate ending boundary $p_j\in\mathcal{P}$, the accumulated reward is initialized as follows:
\begin{equation}\label{accrewardcs1}
V_1(p_j)=F_1(p_1,p_j).
\end{equation}
The corresponding partition boundary path is initialized as
\begin{equation}
\mathcal{S}_1(p_j)=[p_1,p_j],
\end{equation}
if $V_1(p_j)>-\infty$; otherwise, $\mathcal{S}_1(p_j)=\emptyset$. For user $N$, the ending boundary is fixed at $x_{N+1}^\mathrm{bd}=D_x/2=p_K$. After the recursion reaches user $N$, the optimized partition boundary sequence is obtained as
\begin{equation}
\mathcal{S}^\star=\mathcal{S}_N(p_K).
\end{equation}
The corresponding optimal objective value of the discretized problem is $V_N(p_K)$.

The overall procedure of the proposed DP based service-interval partitioning algorithm for antenna roaming is summarized in Algorithm~\ref{aralg}. The boundary candidate set is first constructed over the antenna trajectory. With the starting boundary of user $1$ fixed at $-D_x/2$ and the ending boundary of user $N$ fixed at $D_x/2$, the DP recursion sequentially updates the accumulated reward and the associated partition boundary path for each user and candidate ending boundary. Finally, the optimized discretized partition boundary sequence is obtained as $\mathcal{S}_N(p_K)$. For each user $n$ and candidate ending boundary $p_j$, the recursion searches over all candidate starting boundaries $p_i\in\mathcal{P}$, resulting in a worst-case computational complexity of $\mathcal{O}(NK^2)$, excluding the numerical evaluation of the average rate terms.

\begin{algorithm}[t]
\caption{DP based Service-Interval Partitioning Algorithm}
\label{aralg}
\begin{algorithmic}[1]
\STATE Construct the boundary candidate set $\mathcal{P}$ according to \eqref{candidatear}.
\FOR{each $p_j\in\mathcal{P}$}
\STATE Calculate $V_1(p_j)=F_1(p_1,p_j)$.
\IF{$V_1(p_j)=-\infty$}
\STATE Set $\mathcal{S}_1(p_j)=\emptyset$.
\ELSE
\STATE Set $\mathcal{S}_1(p_j)=[p_1,p_j]$.
\ENDIF
\ENDFOR
\FOR{$n=2$ to $N$}
\FOR{each $p_j\in\mathcal{P}$}
\STATE Calculate $V_n(p_j)\!=\!\max_{p_i\in\mathcal{P}}(V_{n-1}(p_i)+F_n(p_i,p_j))$.
\IF{$V_n(p_j)=-\infty$}
\STATE Set $\mathcal{S}_n(p_j)=\emptyset$.
\ELSE
\STATE Select $p_i^\star\in\arg\max_{p_i\in\mathcal{P}}(V_{n-1}(p_i)+F_n(p_i,p_j))$.
\STATE Set $\mathcal{S}_n(p_j)=[\mathcal{S}_{n-1}(p_i^\star),p_j]$.
\ENDIF
\ENDFOR
\ENDFOR
\STATE Output $\mathcal{S}_N(p_K)$.
\end{algorithmic}
\end{algorithm}
%%%%%%%%%%%%%%%%%%%%%%%%%%%%%%%%%%%%%%%%%%%%%%%%%
%%%%%%%%%%%%%%%%%%%%%%%%%%%%%%%%%%%%%%%%%%%%%%%%%
\section{Simulation Results}
In this section, simulation results are presented to evaluate the proposed antenna placement and antenna roaming schemes and the gains achieved by the corresponding DP based algorithms. In the simulations, the users are randomly distributed on the ground within the service region. Three baseline schemes are considered for comparison. The \emph{fixed antenna} scheme employs a single antenna fixed at the origin with height $d$. The \emph{ideal} scheme adopts conventional antenna placement with infinite antenna movement speed. The \emph{uniform} scheme follows a fixed trajectory, with the antenna sequentially moving to $N$ uniformly spaced service points for antenna placement and traversing $N$ equal-length service intervals for antenna roaming. When the DP based solution is infeasible, the corresponding uniform solution is adopted instead. The main simulation parameters are summarized in Table~\ref{parameter}.

\begin{table}[t]
\centering
\caption{Simulation Parameters}\vspace{-1mm}
\label{parameter}
\begin{tabular}{lc}
\hline
\textbf{Parameter} & \textbf{Value} \\ \hline
Carrier frequency ($f_c$) & $28$~GHz\\
Noise power ($\sigma^2$) & $-90$~dBm\\ 
Effective refractive index ($n_\mathrm{eff}$) & $1.4$ \\
Waveguide height ($d$) & $3$~m \\
Service area ($D_x \times D_y$) & $20$~m $\times$ $10$~m \\
Maximum antenna movement speed ($v_\mathrm{max}$) & $5$~m/s \\ 
Number of candidate points ($K$) & 100 \\ \hline
\end{tabular}\vspace{-4mm}
\end{table}

\begin{figure}[!t]
\centering
\subfigure[User Distribution]{\includegraphics[width=85mm, trim=0 3mm 0 0, clip]{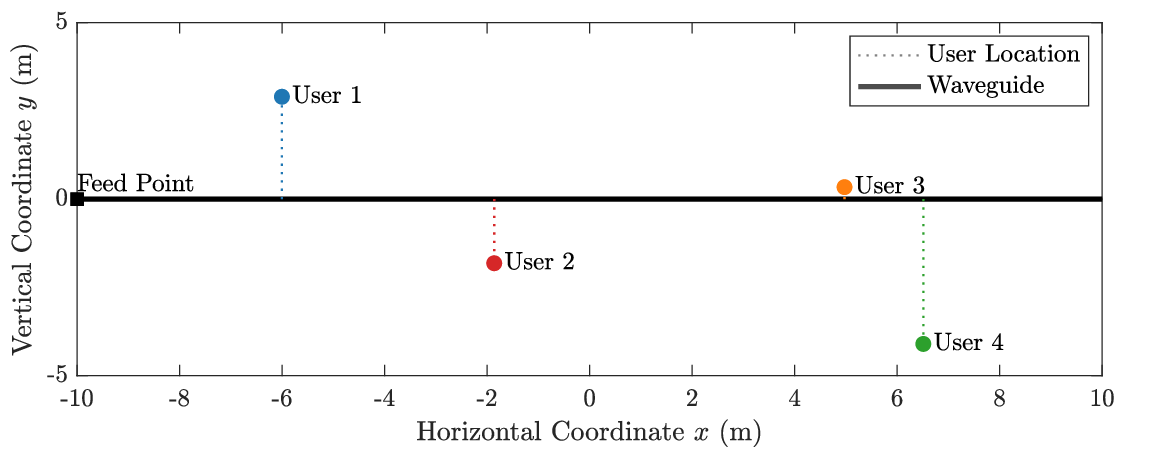}}\vspace{-2mm}
\subfigure[DP based Solution]{\hspace{-4mm}\includegraphics[width=95mm, trim=0 3mm 0 0, clip]{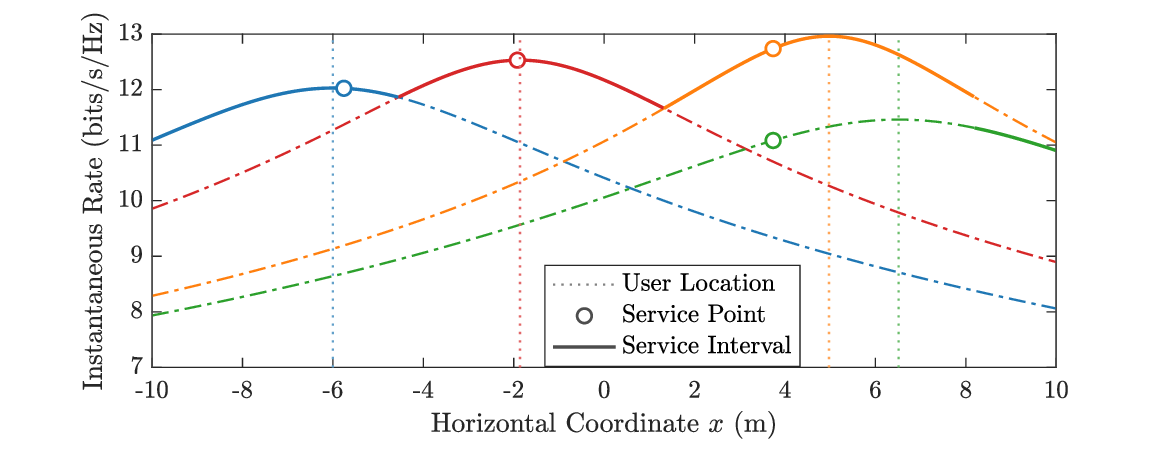}}\vspace{-2mm}
\caption{Illustration of the optimized antenna placement and antenna roaming for a given user distribution, where $N=4$, $T=20$~s, $P_t=20$~dBm, and $R_\mathrm{min}=1$~bits/s/Hz.}\vspace{-4mm}
\label{result0}
\end{figure}

\fref{result0} illustrates the DP based antenna placement and antenna roaming solutions for a representative user distribution. Under antenna placement, the optimized service points deviate from the rate-maximizing positions, i.e., the users' projections onto the waveguide, because the service-point design balances antenna repositioning time and data transmission time. In particular, the service point of user $1$ is shifted rightward from its waveguide projection to shorten the subsequent repositioning distance toward user $2$, who is closer to the waveguide and benefits more from antenna repositioning. For user $3$, the rate-maximizing position is far from the preceding service point, and an intermediate location is therefore selected to balance the achievable rate gain and repositioning delay. For user $4$, further movement provides only marginal gain due to its large distance from the waveguide, and thus transmission begins without additional repositioning. On the other hand, under antenna roaming, the optimized service intervals are consistent with Lemma~\ref{crossing} and Proposition~\ref{uncons}. Specifically, each antenna position tends to be assigned to the user achieving the highest instantaneous rate, with the boundaries between adjacent service intervals lying at the intersections of the corresponding rate curves. For user $4$, since its instantaneous rate is lower than that of user $3$ over the entire waveguide, a minimum service interval is allocated to satisfy the target-rate constraint. Furthermore, the figure supports the performance analysis by illustrating the tradeoff between channel quality and effective transmission time. That is, antenna placement benefits from transmission at high-rate service points, whereas antenna roaming benefits from uninterrupted transmission over service intervals despite a lower spatially averaged rate.

\begin{figure}[!t]
\centering
\subfigure[Sum Rate]{\hspace{-3mm}\includegraphics[width=95mm]{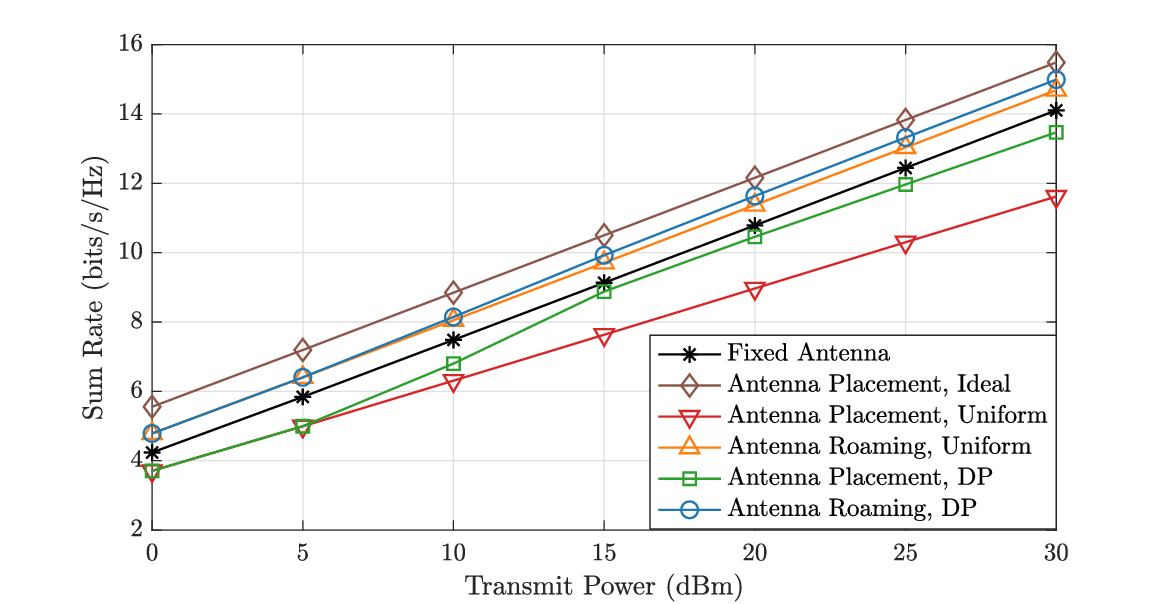}}\vspace{-2mm}
\subfigure[Outage Probability]{\hspace{-3mm}\includegraphics[width=95mm]{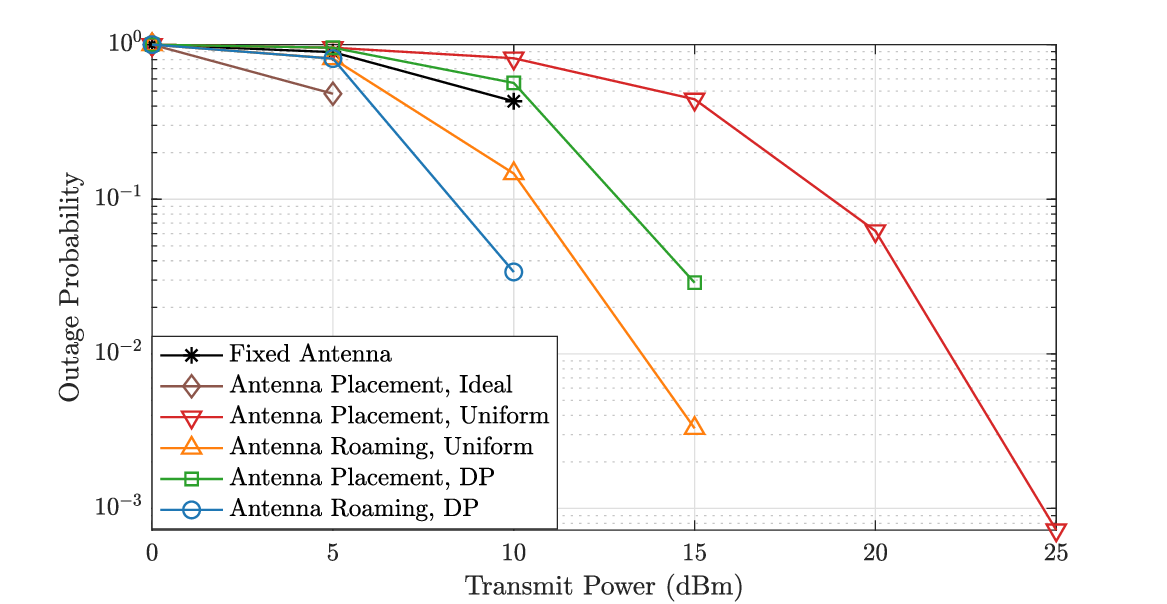}}\vspace{-2mm}
\caption{Impact of the transmit power on the sum rate and outage probability, where $N=4$, $T=20$~s, and $R_\mathrm{min}=1.8$~bits/s/Hz.}\vspace{-4mm}
\label{result1}
\end{figure}

\fref{result1} compares the performance of different schemes as the transmit power increases. As shown in \fref{result1}(a), the sum rates of all schemes increase with transmit power. The ideal antenna placement achieves the highest sum rate by eliminating repositioning overhead, while the DP based schemes consistently outperform their uniform counterparts. The improvement is more pronounced for antenna placement, since service-point optimization can jointly reduce repositioning delay and improve channel quality, whereas antenna roaming already benefits from continuous transmission under uniform partitioning. Meanwhile, DP based antenna roaming approaches the performance of ideal antenna placement, indicating that continuous transmission can effectively compensate for the reduction in instantaneous rate. Under the considered settings, DP based antenna placement performs worse than the fixed-antenna scheme because antenna repositioning consumes part of the scheduling cycle. As shown in \fref{result1}(b), the outage probability decreases with transmit power for all schemes. The DP based schemes achieve lower outage probabilities than the corresponding uniform schemes, confirming the effectiveness of optimizing the spatial service variables under the target-rate constraints. Among the practical schemes with finite antenna movement speed, DP based antenna roaming achieves the lowest outage probability because data transmission continues throughout the antenna movement process.

\begin{figure}[!t]
\hspace{-3mm}\includegraphics[width=95mm]{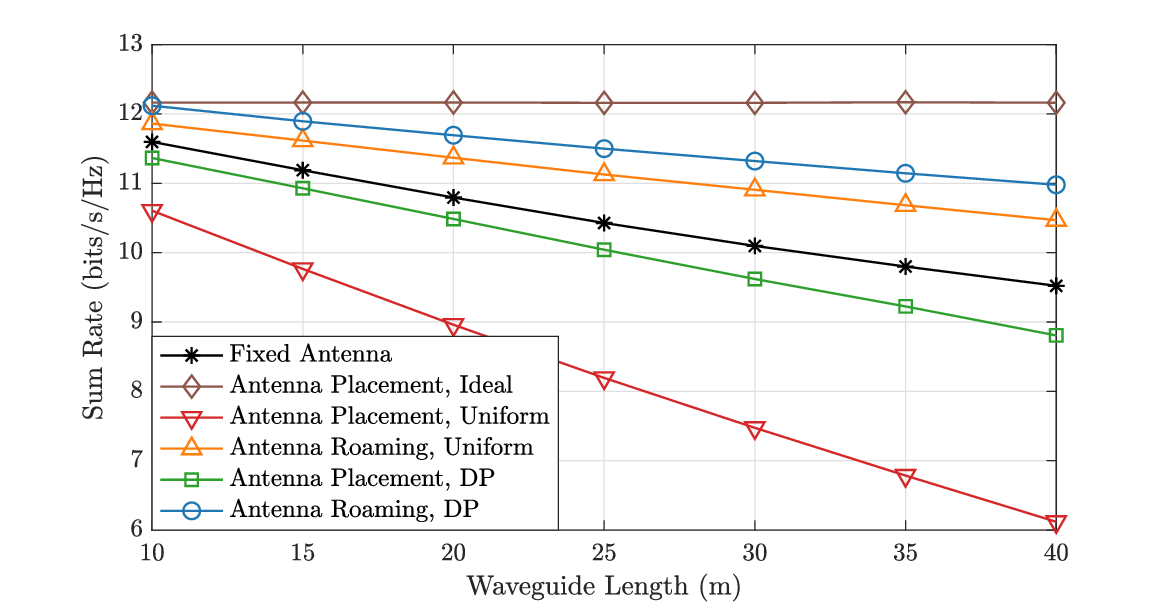}\vspace{-2mm}
\caption{Impact of the waveguide length on the sum rate, where $N=4$, $T=20$~s, $P_t=20$~dBm, and $R_\mathrm{min}=1$~bits/s/Hz.}\vspace{-4mm}
\label{result2}
\end{figure}

\fref{result2} demonstrates the impact of the waveguide length on the sum rate, with the service region extended accordingly. As the waveguide becomes longer, the sum rates of all practical schemes decrease, while ideal antenna placement remains nearly unchanged because repositioning overhead is not considered. For antenna roaming, the performance degradation is mainly attributed to the increased movement range, which exposes the antenna to more positions with relatively low instantaneous rates. By contrast, antenna placement suffers from both the reduced channel efficiency over a larger spatial range and the increased repositioning time. Consequently, antenna roaming becomes increasingly advantageous as the waveguide length grows.

\begin{figure}[!t]
\hspace{-3mm}\includegraphics[width=95mm]{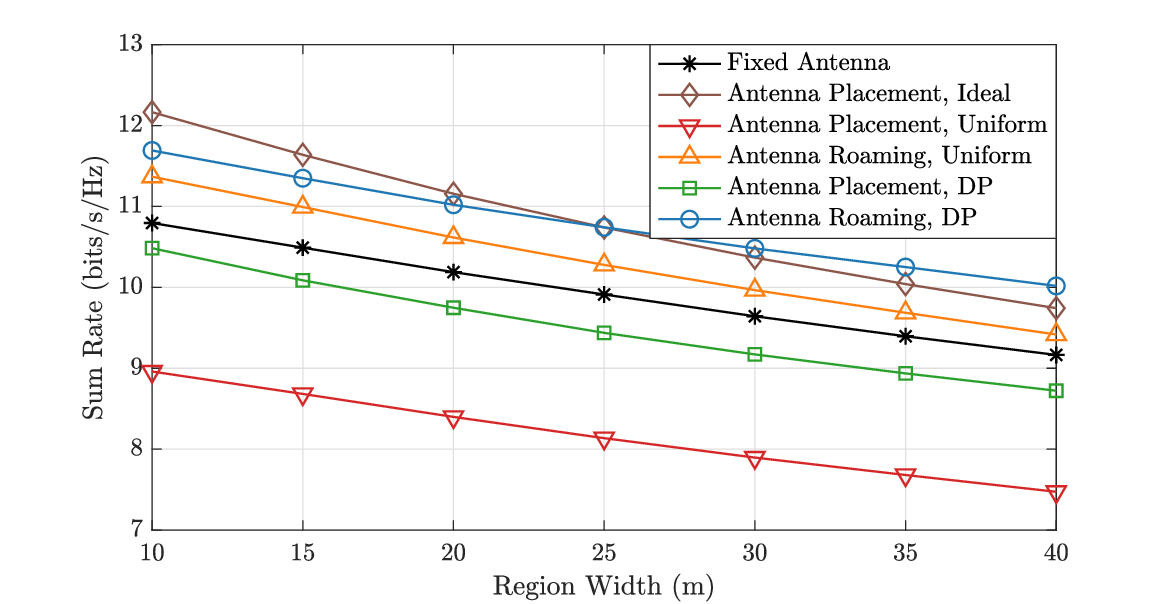}\vspace{-2mm}
\caption{Impact of the region width on the sum rate, where $N=4$, $T=20$~s, $P_t=20$~dBm, and $R_\mathrm{min}=1$~bits/s/Hz.}\vspace{-4mm}
\label{result3}
\end{figure}

In \fref{result3}, the performance of different schemes is evaluated with respect to the service-region width. As the width increases, the users are located farther from the waveguide on average, leading to weaker channels and lower sum rates for all schemes. Moreover, the instantaneous-rate curves of distant users become flatter along the waveguide because antenna movement induces less variation in the propagation distance and hence in the channel gain, as illustrated by user $4$ in \fref{result0}. In this case, antenna roaming can optimize the service-interval lengths to allocate more transmission time to users with higher rates, while assigning weaker users only the intervals required to satisfy their target-rate constraints. As a result, when the service-region width exceeds approximately $25$~m, antenna roaming can outperform ideal antenna placement, which retains equal service durations.

\begin{figure}[!t]
\hspace{-3mm}\includegraphics[width=95mm]{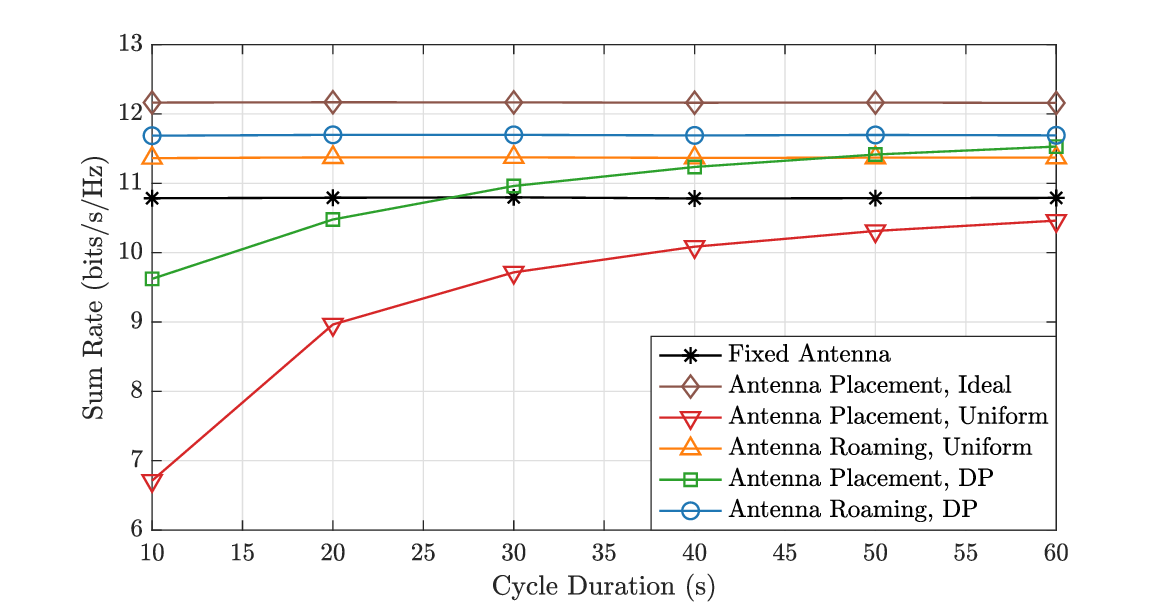}\vspace{-2mm}
\caption{Impact of the cycle duration on the sum rate, where $N=4$, $P_t=20$~dBm, and $R_\mathrm{min}=1$~bits/s/Hz.}\vspace{-4mm}
\label{result4}
\end{figure}

\fref{result4} shows the influence of the cycle duration on different schemes. Except for antenna placement with finite antenna movement speed, the performance of the other schemes is independent of the cycle duration. For antenna placement, increasing the cycle duration reduces the fraction of time consumed by antenna repositioning and correspondingly increases the effective transmission time. Hence, its achievable sum rate increases monotonically with the cycle duration. Under uniform spatial allocation, antenna roaming still achieves a higher sum rate than antenna placement, indicating that its performance advantage remains even without optimizing user service durations through the service-interval lengths.

\begin{figure}[!t]
\hspace{-3mm}\includegraphics[width=95mm]{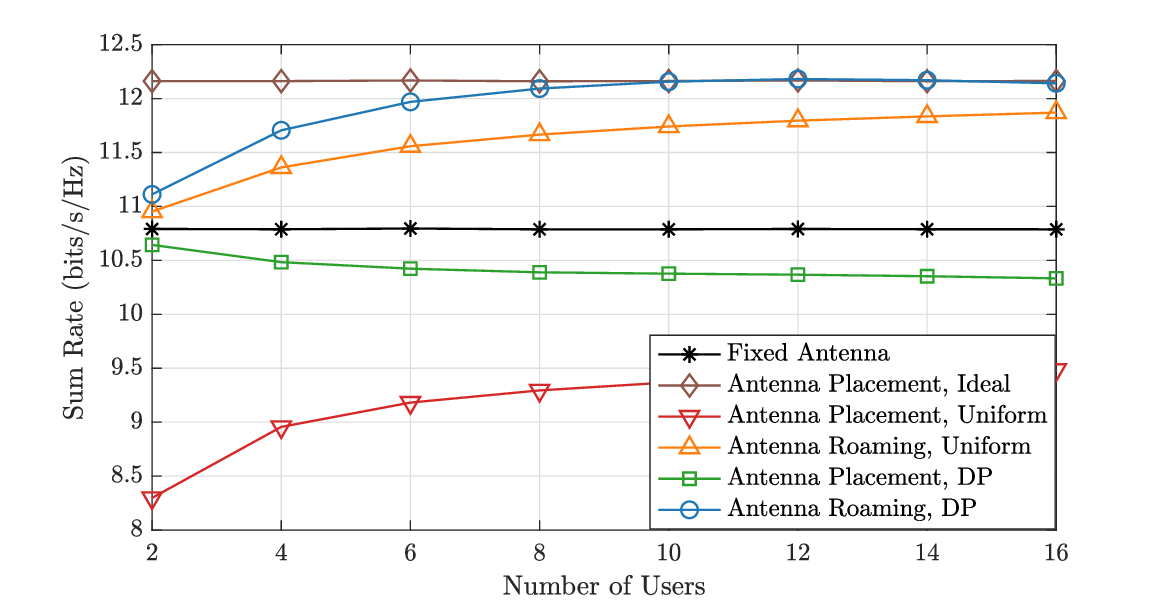}\vspace{-2mm}
\caption{Impact of the number of users on the sum rate, where $T=20$~s, $P_t=20$~dBm, and $R_\mathrm{min}=0.5$~bits/s/Hz.}\vspace{-4mm}
\label{result5}
\end{figure}

The impact of the number of users is presented in \fref{result5}. As the number of users increases, uniform antenna placement and both antenna roaming schemes initially benefit from finer spatial service allocation, which enables more effective exploitation of spatial channel variations. In particular, DP based antenna roaming approaches ideal antenna placement and then decreases slightly when the number of users becomes large, as the target-rate requirements restrict the flexibility of service-interval allocation. By contrast, DP based antenna placement gradually decreases because more frequent repositioning reduces the effective transmission time, while the fixed-antenna and ideal-placement schemes remain nearly unchanged.

\begin{figure}[!t]
\centering
\subfigure[Sum Rate]{\hspace{-3mm}\includegraphics[width=95mm]{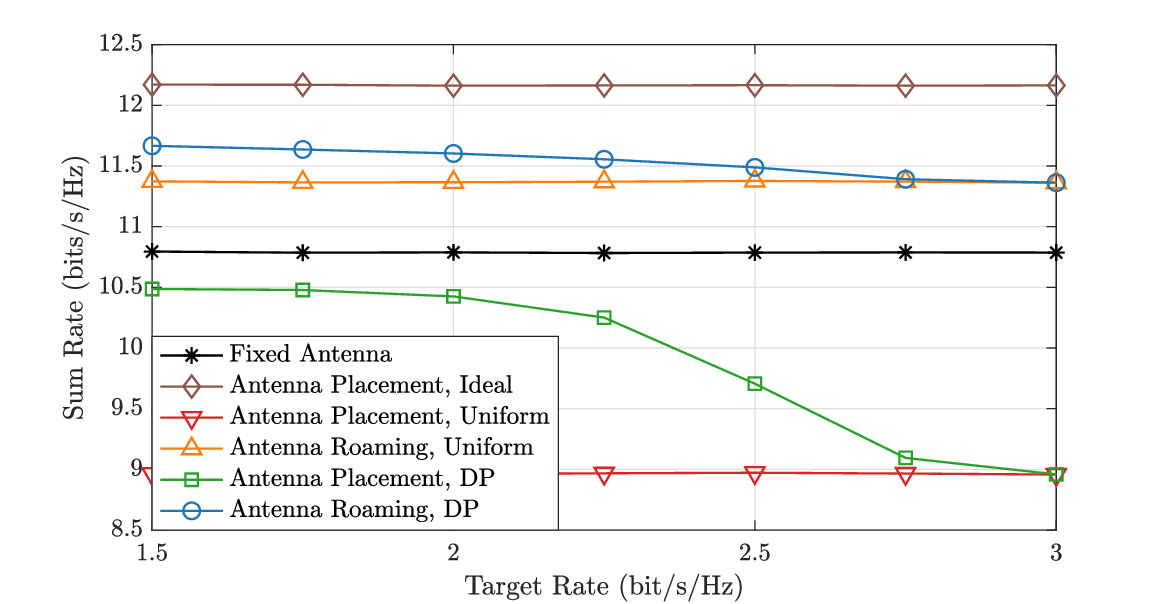}}\vspace{-2mm}
\subfigure[Outage Probability]{\hspace{-3mm}\includegraphics[width=95mm]{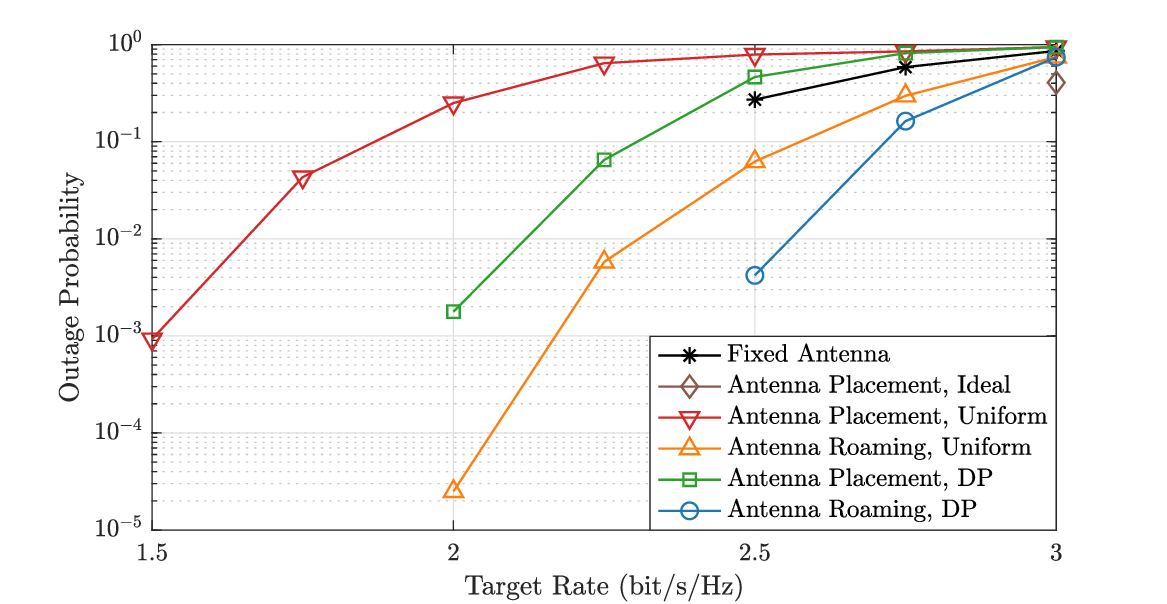}}\vspace{-2mm}
\caption{Impact of the target rate on the sum rate and outage probability, where $N=4$, $T=20$~s, and $P_t=20$~dBm.}\vspace{-4mm}
\label{result6}
\end{figure}

\fref{result6} demonstrates the tradeoff between QoS provisioning and communication performance under different target-rate requirements. As the target rate increases, the proposed DP based algorithms adapt the service points and intervals to satisfy the more stringent rate requirements. This adaptation improves QoS feasibility but reduces the flexibility for sum rate maximization, leading to a gradual decrease in the achievable sum rate. When the target rate becomes excessively high, the discretized DP problem may become infeasible, in which case the corresponding uniform solution is adopted as specified in the simulation setup.
%%%%%%%%%%%%%%%%%%%%%%%%%%%%%%%%%%%%%%%%%%%%%%%%%
%%%%%%%%%%%%%%%%%%%%%%%%%%%%%%%%%%%%%%%%%%%%%%%%%
\section{Conclusions}
This paper investigated pinching-antenna systems with finite antenna movement speed and proposed antenna roaming as an alternative operation mode to antenna placement. A unified cycle-duration framework was established to characterize the tradeoff between effective transmission time and channel quality. Corresponding sum rate maximization problems were formulated and solved using DP. Simulation results validated the theoretical analysis and demonstrated that antenna roaming consistently outperforms antenna placement, with more pronounced gains as positioning overhead increases or the channel-quality improvement achieved through repositioning decreases. These results indicate the potential of antenna roaming as a practical operation mode for future pinching-antenna systems with finite movement speed.
%%%%%%%%%%%%%%%%%%%%%%%%%%%%%%%%%%%%%%%%%%%%%%%%%
%%%%%%%%%%%%%%%%%%%%%%%%%%%%%%%%%%%%%%%%%%%%%%%%%
\section*{Appendix~A: Proof of Proposition~\ref{approxgain}}
In the considered system, when the pinching antenna is located at horizontal coordinate $x$, the instantaneous achievable rate of user $n$ can be expressed as follows:
\begin{equation}
r_n(x)=\log_2\left(1+\frac{\rho\eta^2}{(x-x_n)^2+y_n^2+d^2}\right).
\end{equation}
Since $r_n(x)$ is maximized at $x=x_n$, the first-order derivative with respect to $x$ satisfies
\begin{equation}
r_n'(x_n)=0.
\end{equation}
Moreover, the second-order derivative at $x=x_n$ is given by
\begin{equation}
r_n''(x_n)=-\frac{2\rho\eta^2}{\ln 2\left(y_n^2+d^2\right)\left(y_n^2+d^2+\rho\eta^2\right)}.
\end{equation}
By applying the second-order Taylor expansion around $x=x_n$, it can be obtained that
\begin{align}
r_n(x)&\approx r_n(x_n)\!+\!r_n'(x_n)(x\!-\!x_n)\!+\!\frac{1}{2}r_n''(x_n)(x\!-\!x_n)^2\nonumber\\
&=r_n(x_n)\!-\!\kappa_n(x\!-\!x_n)^2,
\end{align}
where
\begin{equation}
\kappa_n=-\frac{1}{2}r_n''(x_n)=\frac{\rho\eta^2}{\ln 2\left(y_n^2+d^2\right)\left(y_n^2+d^2+\rho\eta^2\right)}.
\end{equation}
Therefore, with antenna placement, the instantaneous achievable rate at the service point can be approximated as
\begin{equation}
r_n(x_n^\mathrm{pt})\approx r_n(x_n)-\kappa_n\left(x_n^\mathrm{pt}-x_n\right)^2.
\end{equation}
Under antenna roaming, the spatial average of $r_n(x)$ over the service interval of user $n$ can be approximated as follows:
\begin{align}
\bar{r}_n^\mathrm{AR}&=\frac{1}{x_{n+1}^\mathrm{bd}\!-\!x_n^\mathrm{bd}}\int_{x_n^\mathrm{bd}}^{x_{n+1}^\mathrm{bd}}\!r_n(x)dx\nonumber\\
&\approx\frac{1}{x_{n+1}^\mathrm{bd}\!-\!x_n^\mathrm{bd}}\int_{x_n^\mathrm{bd}}^{x_{n+1}^\mathrm{bd}}\!\left[r_n(x_n)\!-\!\kappa_n(x\!-\!x_n)^2\right]dx\nonumber\\
&=r_n(x_n)\!-\!\kappa_n\frac{1}{x_{n+1}^\mathrm{bd}\!-\!x_n^\mathrm{bd}}\int_{x_n^\mathrm{bd}}^{x_{n+1}^\mathrm{bd}}\!(x\!-\!x_n)^2dx.
\end{align}
For user $n$, $x_n^\mathrm{bd}$ and $x_{n+1}^\mathrm{bd}$ denote the starting and ending points of its antenna roaming service interval, respectively. The center of this interval is defined as
\begin{equation}
x_n^\mathrm{cen}=\frac{x_n^\mathrm{bd}+x_{n+1}^\mathrm{bd}}{2}.
\end{equation}
Let $z=x-x_n^\mathrm{cen}$. Then, $x-x_n=z+x_n^\mathrm{cen}-x_n$, and the integration interval is transformed into $z\in[-a,a]$, where $a=(x_{n+1}^\mathrm{bd}-x_n^\mathrm{bd})/2$. It follows that
\begin{align}
&\frac{1}{x_{n+1}^\mathrm{bd}\!-\!x_n^\mathrm{bd}}\!\int_{x_n^\mathrm{bd}}^{x_{n+1}^\mathrm{bd}}\!(x\!-\!x_n)^2dx\nonumber\\
=&\frac{1}{2a}\int_{-a}^{a}\!\left(z\!+\!x_n^\mathrm{cen}\!-\!x_n\right)^2dz\nonumber\\
=&\frac{1}{2a}\int_{-a}^{a}\!\left[z^2\!+\!2z\left(x_n^\mathrm{cen}\!-\!x_n\right)\!+\!\left(x_n^\mathrm{cen}\!-\!x_n\right)^2\right]dz\nonumber\\
=&\frac{1}{2a}\int_{-a}^{a}\!z^2dz\!+\!\frac{x_n^\mathrm{cen}\!-\!x_n}{a}\!\int_{-a}^{a}\!zdz\!+\!\left(x_n^\mathrm{cen}\!-\!x_n\right)^2.
\end{align}
Since the integration interval is symmetric around zero, one has $\int_{-a}^{a}zdz=0$. Moreover,
\begin{equation}
\frac{1}{2a}\int_{-a}^{a}z^2dz=\frac{1}{2a}\cdot\frac{2a^3}{3}=\frac{a^2}{3}=\frac{\left(x_{n+1}^\mathrm{bd}-x_n^\mathrm{bd}\right)^2}{12}.
\end{equation}
Thus, the above expression can be simplified as follows:
\begin{equation}
\frac{1}{x_{n+1}^\mathrm{bd}\!-\!x_n^\mathrm{bd}}\!\int_{x_n^\mathrm{bd}}^{x_{n+1}^\mathrm{bd}}\!(x-x_n)^2dx=\left(x_n^\mathrm{cen}\!-\!x_n\right)^2+\frac{\left(x_{n+1}^\mathrm{bd}\!-\!x_n^\mathrm{bd}\right)^2}{12}.
\end{equation}
By substituting the approximations of $r_n(x_n^\mathrm{pt})$ and $\bar{r}_n^\mathrm{AR}$ into the sum-rate gain expression in Proposition~\ref{gain}, and rearranging the terms, the approximated sum-rate gain is obtained as
\begin{align}
\Delta R_\mathrm{sum}\approx &\sum_{n=1}^{N}\!\left(\frac{x_{n+1}^\mathrm{bd}\!-\!x_n^\mathrm{bd}}{D_x}\!-\!\frac{1}{N}\!+\!\frac{x_n^\mathrm{pt}\!-\!x_{n-1}^\mathrm{pt}}{Tv_\mathrm{max}}\right)r_n(x_n)\nonumber\\
&+\!\sum_{n=1}^{N}\!\left(\frac{1}{N}\!-\!\frac{x_n^\mathrm{pt}\!-\!x_{n-1}^\mathrm{pt}}{Tv_\mathrm{max}}\right)\kappa_n\left(x_n^\mathrm{pt}\!-\!x_n\right)^2\\
&-\!\sum_{n=1}^{N}\!\frac{x_{n+1}^\mathrm{bd}\!\!-\!x_n^\mathrm{bd}}{D_x}\kappa_n\!\!\left[\left(x_n^\mathrm{cen}\!-\!x_n\right)^2\!\!+\!\frac{\left(x_{n+1}^\mathrm{bd}\!\!-\!x_n^\mathrm{bd}\right)^2}{12}\right],\nonumber
\end{align}
which completes the proof.\QEDA
%%%%%%%%%%%%%%%%%%%%%%%%%%%%%%%%%%%%%%%%%%%%%%%%%
%%%%%%%%%%%%%%%%%%%%%%%%%%%%%%%%%%%%%%%%%%%%%%%%%
\bibliographystyle{IEEEtran}
\bibliography{KaidisBib}
%%%%%%%%%%%%%%%%%%%%%%%%%%%%%%%%%%%%%%%%%%%%%%%%%
%%%%%%%%%%%%%%%%%%%%%%%%%%%%%%%%%%%%%%%%%%%%%%%%%
\end{document}